\newif\ifFOCS \FOCSfalse
\renewcommand{\Pr}{\mathbf{Pr}}
\newcommand{\opt}{\mathsf{opt}}
\newcommand{\E}{\mathbf{E}}
\newcommand{\R}{\mathbb{R}}
\newcommand{\SampleTree}{\mathtt{SampleTree}}
\newcommand{\eps}{\varepsilon}
\DeclareMathOperator{\sgn}{sgn}
\DeclareMathOperator{\lmax}{lmax}
\newcommand{\FOCSspace}[1]{}
\newcommand{\mybox}[1]{\medskip \noindent\fbox{\parbox{0.97\columnwidth}{#1}}\medskip}
\newcommand{\FOCSspace}[1]{}
\newcommand{\mybox}[1]{\medskip \noindent\fbox{\parbox{0.98\textwidth}{#1}}\medskip}
\newcommand{\grant}{Research supported in part by a Microsoft Graduate Fellowship and in part by a NSF Graduate Fellowship}
\newtheorem{theorem}{Theorem}[section]
\newtheorem{fact}[theorem]{Fact}
\newtheorem{defn}[theorem]{Definition}
\newtheorem{lemma}[theorem]{Lemma}
\newtheorem{example}[theorem]{Example}
\begin{document}
\ifFOCS
\IEEEoverridecommandlockouts
\fi
\ifFOCS
\title{Nearly Maximum Flows in Nearly Linear Time}
\author{\IEEEauthorblockN{Jonah Sherman\IEEEauthorrefmark{1}}
\IEEEauthorblockA{Computer Science Division\\
University of California at Berkeley\\
CA, 94720 USA\\
jsherman@cs.berkeley.edu
}\thanks{\IEEEauthorrefmark{1} \grant}}
\else
\title{Nearly Maximum Flows in Nearly Linear Time}
\author{Jonah Sherman\thanks{\grant}\\University of California, Berkeley}
\date{\today (preliminary draft)}
\fi
\maketitle
\thispagestyle{empty}
\begin{abstract}
We introduce a new approach to the maximum flow problem in undirected, capacitated graphs using $\alpha$-\emph{congestion-approximators}: easy-to-compute functions that approximate the congestion required to route single-commodity demands in a graph to within a factor of $\alpha$.  Our algorithm maintains an arbitrary flow that may have some residual excess and deficits, while taking steps to minimize a potential function measuring the congestion of the current flow plus an over-estimate of the congestion required to route the residual demand.  Since the residual term over-estimates, the descent process gradually moves the contribution to our potential function from the residual term to the congestion term, eventually achieving a flow routing the desired demands with nearly minimal congestion after $\tilde{O}(\alpha\eps^{-2}\log^2 n)$ iterations. Our approach is similar in spirit to that used by Spielman and Teng (STOC 2004) for solving Laplacian systems, and we summarize our approach as trying to do for $\ell_\infty$-flows what they do for $\ell_2$-flows.

Together with a nearly linear time construction of a $n^{o(1)}$-congestion-approximator, we obtain $1+\eps$-optimal single-commodity flows undirected graphs in time $m^{1+o(1)}\eps^{-2}$, yielding the fastest known algorithm for that problem.  Our requirements of a congestion-approximator are quite low, suggesting even faster and simpler algorithms for certain classes of graphs.  For example, an $\alpha$-competitive oblivious routing tree meets our definition, \emph{even without knowing how to route the tree back in the graph}.  For graphs of conductance $\phi$, a trivial $\phi^{-1}$-congestion-approximator gives an extremely simple algorithm for finding $1+\eps$-optimal-flows in time $\tilde{O}(m\phi^{-1})$.
\end{abstract}

\section{Introduction}
The maximum flow problem and its dual, the minimum cut problem are fundamental combinatorial optimization problems with a wide variety of applications.
In the well-known maximum $s-t$ flow problem we are given a graph $G$ with edge capacities $c_e$, and aim to route as much flow as possible from $s$ to $t$ while restricting the magnitude of the flow on each edge to its capacity.  We will prefer instead to think in terms of the equivalent problem of routing a single unit of flow from $s$ to $t$ while minimizing the maximum congestion $|f_e/c_e|$ on any edge; clearly the minimum congestion for unit flow is equal to one divided by the maximum flow of congestion one.  Once formulated that way, we need no longer restrict ourselves to $s-t$ flows;
given a \emph{demand vector} $b \in \R^n$ specifying the excess desired at each vertex, we aim to find a flow $f \in \R^m$ with divergence equal to $b$ that minimizes the maximum congestion $|f_e/c_e|$.\footnote{In fact the $s-t$ case is no less general, since one could always add a new vertices $s$ and $t$, connect each $v$ to $s$ or $t$ according to the sign of $b_v$ with an edge of capacity $\beta |b_v|$ and scale $\beta$ until the additional edges are saturated.}  

In this paper, we introduce a new approach to this problem in undirected graphs.
We maintain a flow that may not quite route $b$ exactly, but we also keep track of an upper bound on how much it will cost us in congestion to fix it back up.  We will aim to minimize a potential function measuring the current congestion plus an over-estimate on the cost of fixing up the residuals.  By not needing to worry about precisely conserving flow at every vertex, we can take large steps in each iteration towards minimizing our potential function.  On the other hand, by intentionally over-estimating the cost of fixing up the residuals, in the course of minimizing our potential function we must inevitably fix them up, as it will cost strictly less to do so.

For a graph $G$, let $C$ be the $m \times m$ diagonal matrix containing the edge capacities, and let $B$ be the $n \times m$ divergence matrix, where $(Bf)_i$ is the excess at vertex $i$.  For a set $S \subseteq V$, we'll write $b_S = \sum_{i \in S}b_S$, the total excess in $S$, and $c_S = \sum_{e: S \leftrightarrow V\setminus S}c_e$, the capacity of the cut $(S,V\setminus S)$ in $G$.  A valid demand vector satisfies $b_V = 0$. 

The minimum congestion flow problem for demands $b$, and its dual, the maximum congested cut, are
\begin{eqnarray}
 \min & \|C^{-1}f\|_\infty  & s.t.\ Bf=b \label{minopt}\\
 \max & b^\top v & s.t.\ \|CB^\top v\|_1 \leq 1 \label{maxopt}
\end{eqnarray}
We refer to the optimum value of these problems as $\opt(b)$.  It is well-known that for problem \eqref{maxopt}, one of the threshold cuts with respect to $v$ achieves $b_S/c_S \geq b^\top v$.  

\subsection{History}
Much of the early work on this problem considers the general, directed edge case, culminating in the still-best binary blocking flow algorithm of Goldberg and Rao\cite{GoldbergRao} that achieves $\tilde{O}(m\min(m^{1/2},n^{2/3})$ time.  Karger and Levine\cite{KargerLevine} give evidence that the undirected case seems easier in graphs with small flow values.  The smooth sparsification technique of Bencz\'{u}r and Karger\cite{Sparsify} shows one can split a graph with $m$ edges into $t= \tilde{O}(m\eps^2/n)$ graphs each with with $\tilde{O}(n\eps^{-2})$ edges, and each of which has at least $(1-\eps)/t$ of the capacity of the original graph.  Therefore, for undirected graphs, any algorithm running in time $T(m,n)$ can be replaced with one running in time $\tilde{O}(m) + (m/n)T(\tilde{O}{n},n)$.  Using the algorithm of Goldberg and Rao yields approximate maximum flows in $\tilde{O}(mn^{1/2})$ time, and for many years that was the best known.

In a breakthrough, Christiano \emph{et al.} show how to compute
approximate maximum flows in $\tilde{O}(mn^{1/3})$ time\cite{Christiano}.  Their new approach uses the nearly linear-time Laplacian solver of Spielman and Teng\cite{ST} to take steps in the minimization of a softmax approximation of the edge congestions.  Each step involves minimizing $\|WC^{-1}f\|_2$, a weighted $\ell_2$-norm of the congestions.  While a naive analysis yields immediately yields a method that makes $\tilde{O}(\sqrt{m})$ such iterations (because $\|\cdot\|_2$ approximates $\|\cdot\|_\infty$ by a $\sqrt{m}$ factor), they present a surprising and insightful analysis showing in fact only $\tilde{O}(m^{1/3})$ such $\ell_2$ iterations suffice.  The maximum-flow-specific parts of \cite{Christiano} are quite simple, needing only to maintain the weights $W$ and then using the Spielman-Teng solver as a black box.

\subsection{Outline}
In this work we pry open that box, extract the parts we need, and apply them \emph{directly} to the maximum flow problem.  In the course of doing so, we push the running time for this decades-old problem nearly down to linear.  Solving a Laplacian system $Lv^*=b^*$, where $L = BCB^\top$, corresponds to finding a flow $f^*$ with $Bf^* = b^*$ minimizing $\|C^{-1/2}f\|_2$.  While Spielman and Teng work entirely in the dual space of vertex potentials and never explicitly represent flows, we can still translate between spaces throughout the algorithm to get an idea of what is actually going on.  At any point, the vertex potentials $v$ induce an optimal flow for \emph{some} demands $b = Lv$; just perhaps not the ones desired.  The solution of Spielman and Teng is to maintain a simpler graph $G'$ that approximates $G$, in the sense that the $\ell_2$ cost of routing flows in $G'$ is within some factor $\alpha$ of the cost in $G$.  The residual flow is routed optimally in $G'$(recursively), by solving $L'v' = b^*-b$.  The potentials that induced that flow are added to $v$, in the hope that $Lv + Lv' \approxeq Lv + L'v' = b^*$.  Indeed, if $G'$ approximates $G$ by a factor $\alpha$, then $b^*-b$ gets smaller in a certain norm by $(1-1/\alpha)$, nudging the flow towards actually routing $b^*$.

Our first step towards obtaining a ST-like algorithm is the definition of a good approximator for the congestion required by $\ell_\infty$-flows.
\begin{defn} An $\alpha$-congestion-approximator for $G$ is a matrix $R$ such that for any demand vector $b$,
\[ \|Rb\|_\infty \leq \opt(b) \leq \alpha\|Rb\|_\infty \]
\end{defn}

Our main result is that we can use good congestion approximators to quickly find
near-optimal flows in a graph.  We prove the following theorem in section \ref{descent}.
\begin{theorem}\label{routethm} There is an algorithm that, given demands $b$ and access to an $\alpha$-congestion-approximator $R$, makes $\tilde{O}(\alpha \eps^{-2}\log^2(n))$ iterations and then returns a flow $f$ and cut $S$ with $Bf=b$ and $\|C^{-1}f\| \leq (1+\eps)b_S/c_S$.  Each iteration requires $O(m)$ time, plus a multiplication by $R$ and $R^\top$.
\end{theorem}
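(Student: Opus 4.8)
The plan is to play the Spielman--Teng game for $\ell_\infty$-flows: run a smoothed gradient descent on a potential that adds the congestion of the current flow to $R$'s over-estimate of the cost of clearing whatever demand is still unrouted, and then read a near-optimal primal--dual pair off the approximate stationary point. First rescale $b$ so that $\|Rb\|_\infty=1$; then $1\le\opt(b)\le\alpha$. Fix a smoothing parameter $\mu=\tilde\Theta(\eps/\alpha)$ and let $\lsym(\cdot)$ be the corresponding two-sided soft-max of the coordinates, which satisfies $\|x\|_\infty\le\lsym(x)\le\|x\|_\infty+\mu\ln(2m)$, has gradient of $\ell_1$-norm at most $1$, is $\mu^{-1}$-smooth with respect to $\|\cdot\|_\infty$, and obeys the Gibbs inequality $\langle x,\nabla\lsym(x)\rangle\ge\lsym(x)-\mu\ln(2m)$. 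Define, for flows $f\in\R^m$,
\[\phi(f)=\lsym(C^{-1}f)+2\alpha\,\lsym\bigl(R(b-Bf)\bigr).\]
Since $\|Rb\|_\infty=1$ we have $\phi(0)=O(\alpha)$, while $\phi\ge0$ always, and $\phi(f)$ separately dominates $\|C^{-1}f\|_\infty$ and $2\opt(b-Bf)\le2\alpha\|R(b-Bf)\|_\infty$; so driving $\phi$ down forces both small congestion and a cheap residual.

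\textbf{The iteration.} Set $g_1=\nabla\lsym(C^{-1}f)$ and $v=2\alpha R^\top\nabla\lsym\bigl(R(b-Bf)\bigr)$, so $\nabla\phi(f)=C^{-1}g_1-B^\top v$; computing these costs one multiplication by each of $R,R^\top$ plus $O(m)$ arithmetic. The step is $f\leftarrow f-\eta\,C\,\sgn\!\bigl(g_1-CB^\top v\bigr)$ with $\eta=\tilde\Theta(\mu)$. The quantitative heart is a width bound on the update direction: the update flow $d=C\,\sgn(\cdot)$ routes exactly its own divergence with congestion at most $1$, so $\|RBd\|_\infty\le\opt(Bd)\le1$; combined with $\mu^{-1}$-smoothness of $\lsym$, one iteration decreases $\phi$ by at least the first-order gain $\eta\|g_1-CB^\top v\|_1$ minus $O(\alpha\eta^2/\mu)$. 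Hence, as long as the reduced gradient $\|g_1-CB^\top v\|_1$ exceeds a threshold $\kappa=\tilde\Theta(\eps/\alpha)$, the potential drops by a fixed amount, and because $\phi\ge0$ starts at $O(\alpha)$ this can happen only $\tilde O(\alpha\eps^{-2}\log^2 n)$ times. Choosing $\mu,\eta,\kappa$ so that the per-step decrease, the $\tilde O(\alpha\eps^{-2}\log^2 n)$ iteration bound, and the soft-max slack $O(\alpha\mu\log m)\le\eps\,\opt(b)$ all fit together is, I expect, the main obstacle, and the part most sensitive to exactly which soft-max and step rule one adopts.

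\textbf{Reading off the solution.} When the descent stops we hold $f$ with $\|g_1-CB^\top v\|_1<\kappa$, and I claim $(f,v)$ is essentially an optimal pair. Feasibility: $\|CB^\top v\|_1\le\|g_1\|_1+\kappa\le1+\kappa$, so $v/(1+\kappa)$ is feasible for \eqref{maxopt}. Near-optimality: split $b=(b-Bf)+Bf$, use $(Bf)^\top v=f^\top(B^\top v)$, near-stationarity $B^\top v\approx C^{-1}g_1$ (the residual error paired against $f$ is at most $\|C^{-1}f\|_\infty\kappa=O(\alpha\kappa)$, since $\|C^{-1}f\|_\infty\le\phi(f)\le\phi(0)$), and the Gibbs inequality for each soft-max, to get
\[b^\top v\ \ge\ \|C^{-1}f\|_\infty+2\alpha\|R(b-Bf)\|_\infty-O(\alpha\mu\log m+\alpha\kappa).\]
As $\opt(b)\ge b^\top v/(1+\kappa)$ and the error terms are $\le\eps$, this yields $\|C^{-1}f\|_\infty+2\alpha\|R(b-Bf)\|_\infty\le(1+\eps)\,\opt(b)$. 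By the definition of an $\alpha$-congestion-approximator the residual $b-Bf$ can be routed with congestion $\le\alpha\|R(b-Bf)\|_\infty$, so $f$ plus such a flow routes $b$ with congestion $\le(1+\eps)\,\opt(b)$; and the stated threshold-cut fact applied to $v$ produces a set $S$ with $b_S/c_S\ge b^\top v$, giving, after rescaling $\eps$, the certificate $\|C^{-1}f\|_\infty\le(1+\eps)\,b_S/c_S$.

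\textbf{Routing the residual exactly.} To return a flow with $Bf=b$ exactly, re-run the whole procedure on the demand $b-Bf$; the displayed bound shows $\opt(b-Bf)\le\tfrac12(1+\eps)\opt(b)$, so after $O(\log(nU/\eps))$ rounds (with $U$ the capacity ratio) the leftover optimum falls below $\eps\,\opt(b)$ divided by the trivial blow-up of routing on a fixed spanning tree, along which we then send the negligible remainder. The gap between the $2\alpha$ in $\phi$ and the $\alpha$ it costs to clear a residual means that each round banks congestion budget proportional to the residual it leaves, so the $\ell_\infty$-norms of the per-round flows telescope to $(1+O(\eps))\opt(b)$ rather than merely summing, and the first-round dual certificate still bounds the total congestion. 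The rounds add only a logarithmic factor, absorbed into $\tilde O(\alpha\eps^{-2}\log^2 n)$; rescaling $\eps$ by a constant throughout then gives the theorem.
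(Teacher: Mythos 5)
Your plan matches the paper's proof structure exactly: minimize $\phi(f) = \lsym(C^{-1}f) + 2\alpha\lsym(R(b-Bf))$ by a sign-step descent using the width bound $\|RBC\|_{\infty\to\infty}\leq 1$, read off a primal--dual pair at approximate stationarity, iterate on the residual a logarithmic number of times, clean up via a spanning tree, and telescope using the slack from the $2\alpha$ coefficient. The real work is in the descent parameters, which you rightly flag as the sticking point, and it is there that your numbers do not close.

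Two concrete issues. First, the step $\eta = \tilde\Theta(\mu)$ is too large: the quadratic smoothness loss $O(\alpha\eta^2/\mu) = O(\alpha\mu) = \tilde O(\eps)$ exceeds the first-order gain $\eta\kappa = \tilde O(\eps^2/\alpha^2)$, so $\phi$ need not decrease at all; the correct step is $\eta = \Theta(\mu\kappa/\alpha)$, giving per-step drop $\Theta(\mu\kappa^2/\alpha)$. Second, your $\kappa = \tilde\Theta(\eps/\alpha)$ comes from bounding $\|C^{-1}f\|_\infty$ by $\phi(0) = O(\alpha)$; the paper instead bounds it by $\phi(f)$ itself, which makes the stationarity error $\kappa\phi(f)$ automatically \emph{relative} to the objective and lets one take $\kappa = \eps/4$. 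Even with both fixes, $\mu=\tilde\Theta(\eps/\alpha)$ and $\kappa=\eps$ give $\phi(0)/(\mu\kappa^2/\alpha) = \tilde\Theta(\alpha^3\eps^{-3})$ iterations, still far from $\tilde O(\alpha\eps^{-2}\log^2 n)$. The paper saves the extra $\alpha$ by (a) writing $\lmax(2\alpha R(b-Bf))$ with the $2\alpha$ \emph{inside} the softmax, so its additive slack is a fixed $\Theta(\log n)$ rather than the $\alpha$-scaled $\Theta(\alpha\mu\log n)$ you must compensate for by shrinking $\mu$, and (b) dynamically rescaling $f,b$ to pin $\phi(f) = \Theta(\eps^{-1}\log n)$, so both the softmax slack and the stationarity error are $\eps$-relative; together these give $\tilde O(\alpha^2\eps^{-3}\log^2 n)$ iterations. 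Note that even this falls short of the theorem's stated $\tilde O(\alpha\eps^{-2}\log^2 n)$: the paper attributes the improved bound to Nesterov's accelerated method and explicitly defers that proof, so no naive steepest-descent analysis will reach the claimed count.
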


For the sake of giving the reader a concrete example of what a congestion-approximator might look like before continuing, we'll begin with two simple toy examples.
\begin{example} Let $T$ be a maximum weight spanning tree in $G$, and let $R$ be the $n-1 \times n$ matrix with a row for each edge in $T$, and
\[ (Rb)_e = \frac{b_S}{c_S} \]
where $(S,V\setminus S)$ is the cut in $G$ induced by removing $e$ from $T$.

Then, $R$ is a $m$-congestion-approximator.
\end{example} \begin{proof}  Since $(Rb)_e$ is the congestion on the cut in $G$ induced by removing $e$ from $T$, certainly $\opt(b) \geq \|Rb\|_\infty$.  On the other hand, at least $1/m$ of the capacity of those cuts is contained in $T$, so routing $b$ through $T$ congests $e$ by at most $m|(Rb)_e|$.  Multiplication by $R$ and $R^\top$ can be done in $O(n)$ time via elimination on leaves.
\end{proof}

For graphs of large conductance, we can obtain a trivial approximator by simply looking at how much the demand into each vertex congests its total degree.
\begin{example} Let $G$ have conductance $\phi$.  Let $R$ be $n \times n$ diagonal matrix with $R_{i,i} = 1/\deg(i)$ where $\deg(i) = c_{\{i\}}$.  Then, $R$ is a $\phi^{-1}$-congestion-approximator.
\end{example} \begin{proof} Routing $|b_i|$ into or out of vertex $i$ certainly must congest one of its edges by at least $|b_i|/\deg(i)$, so $\opt(b) \geq \|Rb\|_\infty$.  On the other hand, the capacity of any cut in $G$ is at least $\phi$ times the total degree of the smaller side.  It follows that if no vertex is congested by more than $\beta$, then no cut is congested by more than $\phi^{-1}\beta$. \end{proof}

Those two simple examples are analogous to the simple cases for $\ell_2$ flows in the ST-algorithm.  The former is analogous to preconditioning by a small-stretch spanning tree, while the latter is analogous to not preconditioning at all.
As in the ST-algorithm, those two simple examples in fact capture the ideas behind our real constructions.  In section \ref{construct}, we show how to apply the \emph{$j$-tree decomposition} of Madry\cite{Approxcut} (itself based on the ultrasparsifiers of Spielman and Teng\cite{ST})) to obtain good congestion approximators for any graph.
\begin{theorem}\label{constructthm} For any $1 \leq k \leq \log n$, there is an algorithm to construct a data structure representing a $\log(n)^{O(k)}$-congestion-approximator $R$ in time $\tilde{O}(m+n^{1+1/k})$.  Once constructed, $R$ and $R^\top$ can each be applied in time $\tilde{O}(n^{1+1/k})$.
\end{theorem}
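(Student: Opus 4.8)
The plan is to build the $\alpha$-congestion-approximator by recursively applying Madry's $j$-tree decomposition, mirroring how Spielman–Teng recursively build ultrasparsifiers, but tracking the $\ell_\infty$ (cut) distortion rather than the $\ell_2$ (spectral) distortion at each level. Recall that a $j$-tree is a graph consisting of a forest (the ``peripheral'' part, with $\le j$ components after contracting the ``core'') together with an arbitrary ``core'' graph on $j$ vertices, and Madry shows that for any $\lambda$ one can, in $\tilde O(m)$ time, write the original graph $G$ as a convex combination of $j$-trees, each of which embeds into $G$ with congestion $\tilde O(1)$ and into which $G$ embeds with congestion $\tilde O(\lambda)$, with $j \approx m/\lambda$ (up to polylog factors). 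Congestion in a $j$-tree decomposes cleanly: the forest part is handled exactly as in the spanning-tree example above (congestion on a forest edge equals $b_S/c_S$ for the induced cut, computable by leaf elimination), and the residual demand that the forest pushes into the core must then be routed in the core, a graph on $j \ll n$ vertices, which we handle recursively.

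First I would set up the single-level reduction precisely: given a congestion-approximator $R_{\mathrm{core}}$ for the $j$-vertex core of a $j$-tree $H$, show how to assemble a congestion-approximator $R_H$ for all of $H$ whose rows are the forest-cut rows together with $R_{\mathrm{core}}$ composed with the linear map sending $b$ to the net demand it induces at the core vertices. I would verify the two-sided bound: the lower bound $\|R_H b\|_\infty \le \opt_H(b)$ because each row is a genuine cut-congestion (forest cuts directly, core cuts pulled back through the unique forest routing), and the upper bound $\opt_H(b) \le \tilde O(1)\cdot\alpha_{\mathrm{core}}\|R_H b\|_\infty$ by routing $b$ first along the forest to the core (congestion controlled by the forest-cut rows, losing only a polylog factor since a constant fraction of each cut's capacity lies in the forest, as in Example 1.7) and then routing the induced core demand using near-optimality of $R_{\mathrm{core}}$. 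Then I would pass from $H$ to $G$: since $G$ is a convex combination $\sum_i p_i H_i$ of $j$-trees with the stated mutual embeddings, $\opt_G(b)$ and $\max_i \opt_{H_i}(b)$ agree up to a $\tilde O(\lambda)$ factor, so stacking the $R_{H_i}$ (or, to control size, using a single representative $j$-tree suffices up to the same factor) gives an $\tilde O(\lambda)\cdot\alpha_{\mathrm{core}}$-approximator for $G$.

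Next I would unroll the recursion. Starting from $G$ on $n$ vertices, one level of Madry with parameter $\lambda = n^{1/k}$ (times polylog) produces cores on $\approx n^{1-1/k}$ vertices while multiplying the approximation factor by $\tilde O(n^{1/k}) = \log(n)^{\Theta(1)}$ — wait, more carefully: to make the vertex count drop by a $1/k$ exponent each level and finish in $k$ levels with the claimed $\log(n)^{O(k)}$ factor, I would instead iterate with a parameter that shrinks the core to $\tilde O(n/\operatorname{polylog})$-ish size per level is too slow; the right choice is to recurse $O(k)$ times, each time reducing the number of vertices by a polylog factor is also wrong. The correct bookkeeping, which I would carry out, is: apply the decomposition so that after $i$ levels the core has $\tilde O(n \cdot (\text{poly}\log n / n^{1/k})^{?})$ vertices — concretely, choose $j$ at the top level to be $n^{1-1/k}$ up to polylogs is not a convex-combination-of-$j$-trees statement directly; rather Madry gives cores of size $m/\lambda$, so with $\lambda = \tilde\Theta(1)$ repeatedly one reduces edge count by a polylog factor per level and needs $O(k)$ levels to reach $n^{1+1/k}$ edges, each level costing one polylog factor in $\alpha$, for a total of $\log(n)^{O(k)}$. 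I would state the recursion $\alpha(n) \le \log(n)^{O(1)}\cdot\alpha(\tilde O(n^{1+1/k}/\operatorname{something}))$ with base case a single vertex (where $\alpha = 1$), solving to $\log(n)^{O(k)}$, and likewise a recursion for construction time $T(m) \le \tilde O(m) + T(\tilde O(n^{1+1/k}))$ and for apply time $\tilde O(n^{1+1/k})$ since each level's forest part applies in linear time and the core is already that small.

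The main obstacle I expect is the careful amortization in the recursion: Madry's decomposition gives a convex combination of many $j$-trees (not a single one), so naively stacking approximators at every level would blow up the data-structure size exponentially in $k$; I need to argue that it suffices to keep one $j$-tree per level (absorbing the combination into the $\tilde O(\lambda)$ factor, legitimate because a congestion-approximator only needs to sandwich $\opt$, and $\opt$ of the average is comparable to $\opt$ of one good term up to the embedding congestion), so that the size and apply-time recursions close at $\tilde O(n^{1+1/k})$ rather than exploding. A secondary technical point is making sure the induced-core-demand map and its transpose are applied in linear time, which again follows from leaf-elimination on the forest, and that the polylog losses from ``a constant fraction of the cut capacity lies in the tree'' compose to only $\log(n)^{O(k)}$ rather than something larger; I would track these constants explicitly in the recursion rather than hand-wave them.
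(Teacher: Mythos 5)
Your overall structure is right (recursive application of Madry's $j$-tree decomposition, with the approximation factor compounding over $k$ levels and the base case being a trivial graph), and you correctly identify the central bookkeeping challenge: Madry gives a \emph{convex combination} of $j$-trees, so one must control how the data structure grows over the recursion. But your proposed fix — keeping a single $j$-tree per level — is wrong, and in a way that breaks correctness, not just the running time. The upper bound $\opt_G(b) \leq \alpha\|Rb\|_\infty$ is proved by routing $b$ in each $H_i$ with congestion $\|Rb\|_\infty$, averaging those routings to get a flow in $\sum_i \lambda_i H_i$ with the same congestion, and then using the guarantee that $\sum_i \lambda_i H_i$ embeds into $G$ with congestion $\tilde O(\log n)$. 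A \emph{single} $H_i$ generically does not embed into $G$ with bounded congestion (each $H_i$ \emph{dominates} $G$, so it is typically a much larger graph); only the average does. So $\opt_{H_1}(b)$ can be far smaller than $\opt_G(b)$, and the one-tree approximator would violate the upper bound. Your heuristic that ``$\opt$ of the average is comparable to $\opt$ of one good term up to the embedding congestion'' has the inequality in the wrong direction.

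The paper's resolution is to keep $t$ of the trees (the $t$ with largest $\lambda_i$) and recurse on all of them. This works because the raw distribution has only $O(t\log^2 n)$ members, so the kept mass is at least $1/O(\log^2 n)$, and rescaling costs only an extra $O(\log^2 n)$ factor in the embedding congestion per level — which is absorbed into the $\log(n)^{O(k)}$ total. Crucially, the size and time recursions still close: since each core has $n/t$ vertices, the edge-count recursion is $E(n) \leq nt + tE(n/t)$, which solves to $O(knt) = \tilde O(n^{1+1/k})$. The analysis is phrased via a random ``$\mathtt{SampleTree}$'' procedure: every tree in the sample space dominates $G$ (lower bound), and the \emph{expected} tree is routable in $G$ with congestion $\log(n)^{O(k)}$ (upper bound). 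A second gap: you never sparsify the cores. Madry only guarantees each core has at most $m$ edges (not $\tilde O(n/t)$), so without applying Bencz\'ur–Karger sparsification to each core before recursing, the time recursion $T(m) = \tilde O(tm) + t\cdot T(\text{core})$ does not close. The paper sparsifies $G$ at the top and each core at each recursion step, paying only an $O(\log n)$ factor in congestion per level via multicommodity max-flow/min-cut, which again folds into $\log(n)^{O(k)}$.
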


Combining theorem \ref{constructthm} where $k = \tilde{\theta}(\sqrt{\log n})$ with theorem \ref{routethm} yields our main result of a nearly-linear time algorithm for minimum congestion flows.
\begin{theorem} There is an algorithm to compute $(1+\eps)$-approximate minimum congestion flows in time $m\eps^{-2}\cdot \exp(\tilde{O}(\sqrt{\log n}))$.
For graphs of conductance $\phi$, the same can be done in time $\tilde{O}(m\phi^{-1})$.
\end{theorem}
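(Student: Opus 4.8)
The plan is to plug the explicit congestion-approximators constructed above into the descent procedure of Theorem~\ref{routethm} and optimize its one free parameter; there is no new conceptual ingredient.

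\textbf{General bound.} First I would delete any vertex not incident to an edge: in a routable instance its demand is zero (and otherwise $\opt(b)=\infty$), so we may assume every vertex is incident to at least one edge, whence $2m\ge n$. Fix $k$ with $1\le k\le\log n$ (to be chosen). By Theorem~\ref{constructthm} I can build in $\tilde O(m+n^{1+1/k})$ time a data structure for an $\alpha$-congestion-approximator $R$ with $\alpha=\log(n)^{O(k)}$ such that each of $R$ and $R^\top$ applies in $\tilde O(n^{1+1/k})$ time. Running Theorem~\ref{routethm} on $(b,R)$ then makes $\tilde O(\alpha\eps^{-2}\log^2 n)$ iterations --- each costing $O(m)$ plus one multiplication by $R$ and one by $R^\top$ --- and outputs a flow $f$ with $Bf=b$ and a cut $S$ with $\|C^{-1}f\|_\infty\le(1+\eps)\,b_S/c_S$. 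Since $v=\one_S/c_S$ is feasible for \eqref{maxopt} with objective $b_S/c_S$, weak duality gives $b_S/c_S\le\opt(b)$; as $f$ is feasible for \eqref{minopt} we also have $\opt(b)\le\|C^{-1}f\|_\infty$, so $f$ is $(1+\eps)$-approximate. Adding the (dominated) construction cost, the running time is
\[
\tilde O\!\bigl(\alpha\,\eps^{-2}\,(m+n^{1+1/k})\bigr)\ \le\ m\,\eps^{-2}\cdot\log(n)^{O(k)}\,n^{1/k},
\]
using $n^{1+1/k}\le 2m\,n^{1/k}$ and folding every $\tilde O(\cdot)$ (and the $\log^2 n$ iteration overhead) into $\log(n)^{O(k)}$.

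It remains to choose $k$. Writing $\log(n)^{O(k)}n^{1/k}=\exp\!\bigl(O(k)\log\log n+(\log n)/k\bigr)$, the two terms balance at $k=\Theta\!\bigl(\sqrt{\log n/\log\log n}\bigr)=\tilde\Theta(\sqrt{\log n})$ --- which lies in $[1,\log n]$ for all but constantly many $n$ --- and then both terms are $\Theta\!\bigl(\sqrt{\log n\log\log n}\bigr)$. Hence $\log(n)^{O(k)}n^{1/k}=\exp\!\bigl(O(\sqrt{\log n\log\log n})\bigr)=\exp(\tilde O(\sqrt{\log n}))$, and any $\mathrm{polylog}(n)$ slack hidden above is $\exp(O(\log\log n))$ and is absorbed likewise, giving total time $m\,\eps^{-2}\cdot\exp(\tilde O(\sqrt{\log n}))$.

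\textbf{Conductance bound.} Here I would instead take $R$ to be the diagonal $\phi^{-1}$-congestion-approximator $R_{i,i}=1/\deg(i)$ from the second toy example, so $R^\top=R$ and a multiplication by $R$ costs $O(n)=O(m)$. Theorem~\ref{routethm} with $\alpha=\phi^{-1}$ then runs $\tilde O(\phi^{-1}\eps^{-2}\log^2 n)$ iterations of $O(m)$ time each, for a total of $\tilde O(m\phi^{-1}\eps^{-2})=\tilde O(m\phi^{-1})$ with the $\eps$-dependence suppressed.

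I do not expect a real obstacle in this theorem: all the work lives in Theorems~\ref{routethm} and~\ref{constructthm}. The only points that need care are the bookkeeping reduction to $n=O(m)$ (so the $n^{1+1/k}$ term can be charged against $m\cdot n^{1/k}$) and checking that no polylogarithmic factor --- from either $\tilde O$, or from the $\log^2 n$ and $\log(n)^{O(k)}$ overheads --- escapes the $\exp(\tilde O(\sqrt{\log n}))$ envelope, which holds since $\mathrm{polylog}(n)\ll\exp(\sqrt{\log n})$.
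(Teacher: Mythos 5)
Your proposal is correct and matches the paper's intended argument: the paper states the theorem follows by combining Theorem~\ref{constructthm} with $k=\tilde\Theta(\sqrt{\log n})$ and Theorem~\ref{routethm}, and you carry out exactly that combination, filling in the bookkeeping (reduction to $n=O(m)$, weak-duality certificate, optimization over $k$, and the trivial diagonal approximator for the conductance case) that the paper leaves implicit.
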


\section{\label{descent}Congestion Potential}

The key to our scheme lies transforming problem \eqref{minopt} to an
unconstrained optimization problem, using the congestion-approximator to bound the cost of routing the residual.  To that end, we introduce our potential function.

\begin{equation}
 \min \|C^{-1}f\|_\infty + 2\alpha\|R(b-Bf)\|_\infty \label{newopt}
\end{equation}

We believe the \emph{mere statement} of the potential function \eqref{newopt} to be the most important idea in this paper.  Once \eqref{newopt} has been written down, the remainder of our algorithm is nearly obvious.  We solve problem \eqref{newopt} nearly-optimally by approximating $\|\cdot\|_\infty$ with a softmax.  The softmax is well-approximated by its gradient in the region where steps are taken with $\ell_\infty$-norm $O(1)$.  Since $\|RBC\|_{\infty\to\infty} \leq 1$ for a congestion-approximator, we can take steps on edge $e$ of size $\Omega(\alpha^{-1})c_e$.  We include the details at the end of this section, proving the following theorem.
\begin{theorem}\label{almostthm} There is an algorithm \texttt{AlmostRoute} that given $b$ and $\eps \leq 1/2$, performs $\tilde{O}(\alpha\eps^{-2}\log n)$ iterations and returns a flow $f$ and cut $S$ with,
\[  \|C^{-1}f\|_\infty + 2\alpha\|R(b-Bf)\|_\infty \leq (1+\eps)\frac{b_S}{c_S} \]

Each iteration requires $O(m)$ time plus a multiplication by $R$ and $R^\top$.
\end{theorem}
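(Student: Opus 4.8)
The plan is to minimize the potential in \eqref{newopt} by replacing both $\ell_\infty$-norms with a scaled softmax (log-sum-exp) and running gradient descent with a fixed, carefully chosen step size. Define $\mathrm{smax}(x) = \ln\!\big(\sum_i (e^{x_i} + e^{-x_i})\big)$, which satisfies $\|x\|_\infty \le \mathrm{smax}(x) \le \|x\|_\infty + \ln(2N)$ for an $N$-dimensional vector, and is $1$-smooth in the $\ell_\infty\to\ell_1$ sense (its gradient is $1$-Lipschitz from $\ell_\infty$ to $\ell_1$). Set $\lambda = \eps^{-1}\ln(2N)$ for an appropriate dimension count $N = O(m)$, and work with the scaled potential $\Phi(f) = \lambda^{-1}\,\mathrm{smax}\big(\lambda C^{-1}f,\ 2\alpha\lambda R(b - Bf)\big)$, stacking the two argument blocks into one vector. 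Then $\Phi$ sandwiches the true objective \eqref{newopt} up to an additive $\eps$, so it suffices to drive $\Phi$ down to within a multiplicative $1+O(\eps)$ of the optimum.

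The iteration is: compute the gradient $g = \nabla_f \Phi(f)$, which by the chain rule is $C^{-1}p - 2\alpha B^\top R^\top q$ where $(p,q)$ is the softmax gradient of the stacked argument (so $\|p\|_1 + \|q\|_1 = 1$); if $\|Cg\|_1$ is small we stop, otherwise take a step $f \leftarrow f - \delta\, C\,\mathrm{sgn}(Cg)$ — i.e., move each edge by $\pm\delta c_e$ in the direction of steepest descent in the $\|C^{-1}\cdot\|_\infty$ geometry. The key structural input is that $\|RBC\|_{\infty\to\infty} \le 1$ for an $\alpha$-congestion-approximator (since $\|R(Bf)\|_\infty \le \opt(Bf) \le \|C^{-1}f\|_\infty$), which controls how much the residual block of the argument moves: a step of $\ell_\infty$-size $O(1)$ in $C^{-1}f$ moves $C^{-1}f$ by $O(1)$ and moves $2\alpha R(b-Bf)$ by $O(\alpha)\cdot O(\alpha^{-1}) = O(1)$ when $\delta = \Theta(\alpha^{-1})$. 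So with $\delta = \Theta(\alpha^{-1})$ the argument of the softmax moves by $O(1)$ in $\ell_\infty$, and the standard smoothness bound gives a per-step decrease of roughly $\delta\|Cg\|_1 - O(\lambda\delta^2) = \Omega(\alpha^{-1})\big(\|Cg\|_1 - O(\eps)\big)$ once $\lambda\delta = O(\eps)$, i.e. genuine progress as long as $\|Cg\|_1 = \Omega(\eps)$.

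For the iteration count, I would track $\Phi$ itself (bounded below by $0$ and initially $O(\opt(b))$ if we start from, say, $f = 0$, giving $\Phi(0) = \lambda^{-1}\mathrm{smax}(0, 2\alpha\lambda Rb) \le 2\alpha\|Rb\|_\infty + \eps \le 2\,\opt(b) + \eps$). A cleaner accounting: while $\|Cg\|_1$ is large the potential drops by $\Omega(\alpha^{-1}\eps\,\Phi)$ per step relative to the optimum — more precisely one shows $\|Cg\|_1 \ge \Phi_{\text{cur}}$-minus-its-optimum over $\Phi_{\text{cur}}$ by a convexity/homogeneity argument, since $g$ is the gradient of a near-sublinear function — so a potential-ratio argument gives convergence to $(1+\eps)$ of optimal in $\tilde O(\alpha\eps^{-2}\log n)$ steps. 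When we halt, $\|Cg\|_1 = \|p\|_1 + 2\alpha\|RBC\,\mathrm{(weighted\ }q)\|\cdots$ is small; unpacking this, the edge congestions $C^{-1}f$ and the residual congestions $R(b-Bf)$ must both be nearly "aligned" against a single dual vector, and the threshold-cut fact quoted after \eqref{maxopt} converts the dual vector hidden in $q$ (restricted to the residual block, lifted through $R^\top$) into an explicit cut $S$ with $b_S/c_S$ matching the potential value; that yields the claimed $\Phi(f) \le (1+\eps) b_S/c_S$, which is the theorem since $\Phi \ge \|C^{-1}f\|_\infty + 2\alpha\|R(b-Bf)\|_\infty$.

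**The main obstacle** I anticipate is the halting/duality step — turning "$\|Cg\|_1$ small" into a genuine cut certificate with the right constant. One must show that the steepest-descent residual being small forces the current point to be near-optimal for the softmax problem, then pass from the softmax optimum to the true $\ell_\infty$ optimum \eqref{newopt} (losing only the additive $\eps$ baked into $\lambda$), and finally extract from the softmax gradient's $q$-block a feasible dual $v$ for \eqref{maxopt} whose value is $\ge (1-\eps)\Phi(f)/(1+\text{something})$; the threshold-cut rounding then gives $S$. Getting all the $\eps$'s to combine into a single clean $(1+\eps)$ factor (rather than $(1+\eps)^3$) requires rescaling $\eps$ by a constant at the start and being careful that the additive softmax error $\eps$ is charged against $\opt(b) \ge \|Cg\|$-type lower bounds rather than appearing as a free-floating additive term. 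The descent-progress bound itself is routine once the $\|RBC\|_{\infty\to\infty} \le 1$ observation fixes the step size.
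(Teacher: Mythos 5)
Your high-level plan — replace both $\ell_\infty$ norms in \eqref{newopt} with a temperature-scaled softmax, do steepest descent in the $\|C^{-1}\cdot\|_\infty$ geometry using $\|RBC\|_{\infty\to\infty}\le 1$ to control the residual block, and read off the dual/cut from the softmax gradient at termination — is the same route the paper takes. But there are three concrete gaps, and two of them are fatal as written.

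First, the stacked softmax $\mathrm{smax}(\lambda C^{-1}f,\,2\alpha\lambda R(b-Bf))$ approximates $\max(\|C^{-1}f\|_\infty,\,2\alpha\|R(b-Bf)\|_\infty)$, not the \emph{sum} appearing in \eqref{newopt} and in the statement you need to prove. You would only get the max bounded by $(1+\eps)b_S/c_S$, which is weaker by up to a factor of $2$ and would break the factor-of-two slack argument of Lemma~\ref{slack} downstream. The paper sidesteps this by using $\phi(f)=\lmax(C^{-1}f)+\lmax(2\alpha R(b-Bf))$, a \emph{sum} of two softmaxes, which upper-bounds the sum of $\ell_\infty$ norms (and lower-bounds it up to an additive $O(\log n)$).

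Second, your starting-potential bound is wrong: $\Phi(0)\approx 2\alpha\|Rb\|_\infty$, and the congestion-approximator only guarantees $\|Rb\|_\infty\le\opt(b)\le\alpha\|Rb\|_\infty$, so $\Phi(0)$ can be as large as $2\alpha\,\opt(b)$, not $2\,\opt(b)+\eps$ as you wrote. This matters: without further work your iteration budget must absorb an extra $\alpha$. The paper handles this by keeping $\phi(f)$ pinned in the window $\Theta(\eps^{-1}\log n)$ via multiplicative rescaling of $f$ and $b$ during the run — the "while $\phi(f)<16\eps^{-1}\log n$, scale up by $17/16$" step — and argues there are only $O(\log\alpha)$ such phases since $\|Rb\|_\infty$ locates $\opt(b)$ to within a factor $\alpha$. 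Your proposal has no analogue of this rescaling and the $\lambda=\eps^{-1}\ln(2N)$ temperature alone does not substitute for it.

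Third, and most seriously, your claimed iteration bound of $\tilde O(\alpha\eps^{-2}\log n)$ via naive steepest descent rests on the unargued assertion that $\|Cg\|_1 \gtrsim (\Phi-\Phi^*)/\Phi$ "by a convexity/homogeneity argument." Softmax is not positively homogeneous, and the gradient of $\lmax$ has $\ell_1$-norm at most $1$ regardless of how far the argument is from optimal, so I do not see how to establish that inequality; as stated it is almost certainly false. The paper is upfront that plain steepest descent only yields $\tilde O(\alpha^2\eps^{-3}\log^2 n)$ iterations — each step decreases $\phi$ by $\Omega(\eps^2\alpha^{-2})$ via the $(1+4\alpha^2)$-smoothness bound and a step size of $\|C\nabla\phi\|_1/(1+4\alpha^2)$ — and defers the claimed $\tilde O(\alpha\eps^{-2}\log n)$ bound to Nesterov acceleration in a later version. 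If you want the stated bound, you will need an accelerated method, not a potential-ratio argument for plain descent.

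The termination/duality step you sketch is essentially right and matches the paper: take $v=R^\top q$, use $\|\nabla\lmax\|_1\le 1$ to bound $\|CB^\top v\|_1$, use $\nabla\lmax(x)^\top x\ge \lmax(x)-\log(2d)$ to lower-bound $b^\top v$ against $\phi(f)$, and round $v$ to a threshold cut. So the duality piece is fine; the three items above are where the argument needs repair.
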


The flow $f$ may not quite route the demands we wanted.  Fortunately, that will be easy to fix.  The extra factor of two in equation \eqref{newopt} means that half of the contribution to the objective value from the residual part is pure slack.  On the other hand, if $f$ is nearly optimal, there can't be too much slack.
\begin{lemma}\label{slack} Suppose $\|C^{-1}f\|_\infty + 2\alpha\|R(b-Bf)\|_\infty \leq (1+\eps)\opt(b)$.  Then, $\|R(b-Bf)\|_\infty \leq \eps\|Rb\|_\infty$.
\end{lemma}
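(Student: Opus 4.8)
The plan is to exploit two facts: that $\opt$ behaves like a seminorm on demand vectors, and that the congestion-approximator $R$ controls $\opt$ from both sides. Write $b' = b - Bf$ for the residual demand left by $f$. First I would record that $\opt$ satisfies the triangle inequality and is homogeneous: given flows routing $b_1$ and $b_2$, their sum routes $b_1 + b_2$ with congestion at most the sum of the two congestions, so $\opt(b_1+b_2) \le \opt(b_1) + \opt(b_2)$; and negating a flow shows $\opt(-b) = \opt(b)$.

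Next, since $f$ is a (not necessarily optimal) flow routing exactly the demand $Bf = b - b'$, we have $\|C^{-1}f\|_\infty \ge \opt(b - b') \ge \opt(b) - \opt(b')$, where the last step is the triangle inequality applied to the decomposition $b = (b-b') + b'$. Substituting this lower bound into the hypothesis $\|C^{-1}f\|_\infty + 2\alpha\|Rb'\|_\infty \le (1+\eps)\opt(b)$ and cancelling one copy of $\opt(b)$ from both sides yields
\[ 2\alpha\|Rb'\|_\infty - \opt(b') \le \eps\,\opt(b). \]

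Finally I would apply the two defining inequalities of the $\alpha$-congestion-approximator. On the left, $\opt(b') \le \alpha\|Rb'\|_\infty$, so the left-hand side is at least $\alpha\|Rb'\|_\infty$; on the right, $\opt(b) \le \alpha\|Rb\|_\infty$. Combining gives $\alpha\|Rb'\|_\infty \le \eps\,\alpha\|Rb\|_\infty$, and dividing by $\alpha$ is the claim. There is no serious obstacle here — the argument is a short chain of inequalities — so the only things to be careful about are invoking each half of the congestion-approximator definition in the direction that weakens the estimate correctly (i.e. upper-bounding $\opt(b')$ and $\opt(b)$, never lower-bounding them), and confirming that the factor $2$ in the potential \eqref{newopt} is precisely what is needed to absorb the $\opt(b') \le \alpha\|Rb'\|_\infty$ term and leave a clean $\alpha\|Rb'\|_\infty$ behind.
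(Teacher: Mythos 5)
Your proof is correct and is essentially the paper's argument in slightly different clothing. The paper explicitly takes a near-optimal routing $f'$ of the residual $b' = b - Bf$, observes that moving from $f$ to $f+f'$ drops the potential by at least $\alpha\|Rb'\|_\infty$ (because adding $f'$ removes $2\alpha\|Rb'\|_\infty$ from the residual term at a cost of only $\|C^{-1}f'\|_\infty \le \alpha\|Rb'\|_\infty$ added to the congestion term), and notes this drop cannot exceed $\eps\,\opt(b)$ since the potential never goes below $\opt(b)$. Your version replaces the explicit construction of $f'$ with the abstract subadditivity $\opt(b) \le \opt(Bf) + \opt(b')$ together with $\|C^{-1}f\|_\infty \ge \opt(Bf)$, which is exactly what $f+f'$ witnesses; the chain $2\alpha\|Rb'\|_\infty - \opt(b') \ge \alpha\|Rb'\|_\infty$ and $\opt(b)\le\alpha\|Rb\|_\infty$ is the same in both. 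No gaps.
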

\begin{proof}
Let $f$ meet the assumption.  Let $f'$ be a routing of $b-Bf$ in $G$ with $\|C^{-1}f'\|_\infty \leq \alpha \|R(b-Bf)\|_\infty$.  Then, moving from $f$ to $f + f'$ decreases the objective value by atleast $\alpha\|R(b-Bf)\|$.  On the other hand, that decrease can't exceed $\eps \opt(b)$.  Since $\opt(b) \leq \alpha\|Rb\|_\infty$, the lemma follows.
\end{proof}

So while $\texttt{AlmostRoute}$ may not route $b$, our bound for the congestion to route the residual is at most half of our bound to route the original demands.  Furthermore, the objective already pays the cost of routing that residual.  In fact, it pays it with a factor of two, so we need only route the remaining residual within a factor-two of optimal.   That suggests an obvious way to route demands $b$: repeatedly invoke $\texttt{AlmostRoute}$ on the remaining residual, until the congestion required to route it is extremely small compared to the congestion required to route $b$.  Then route the final residual in a naive way, such as via a maximal spanning tree.  The cost of that final routing will be paid for by the slack in the objective value of the first routing, simply by finding a factor $3/2$-optimal routing for each residual after the first case. 

Formalizing the latter argument completes our proof of theorem \ref{routethm}.  Set $b_0 \gets b$, and let $(f_0, S_0) \gets \mathtt{AlmostRoute}(b_0, \eps)$.  Next for $i = 1, \ldots, T$ where $T = \log(2m)$, set $b_i \gets b_{i-1}-Bf_{i-1}$ and $(f_i, S_i) \gets \mathtt{AlmostRoute}(b_i, 1/2)$ (we don't actually need any $S_i$ after $S_0)$. Finally, let $b_{T+1} = b_t-Bf_t$, and let $f_{T+1}$ be a flow routing $b_{T+1}$ in a maximal spanning tree of $G$.
Output  $f_1 + \cdots + f_{T+1}$ and $S_0$.  Observe that
theorem \ref{almostthm} yields
\begin{eqnarray*}
 \|C^{-1}f_0\|_\infty + 2\alpha\|Rb_1\|_\infty &\leq& (1+\eps)b_{S_0}/c_{S_0} \\
 \|C^{-1}f_i\|_\infty + 2\alpha\|Rb_{i+1}\|_\infty &\leq &(3/2)\opt(b_i) \leq (3/2)\alpha\|Rb_i\|_\infty 
 \end{eqnarray*}
Beginning with the former inequality and repeatedly applying the latter yields,
\begin{eqnarray*}
(1+\eps)\frac{b_{S_0}}{c_{S_0}} & \geq & \|C^{-1}f_0\|_\infty + 2\alpha\|Rb_1\|_\infty \\
 & \geq & (1/2)\alpha\|Rb_1\|_\infty + \|C^{-1}f_0\|_\infty + \|C^{-1}f_1\|_\infty + 2\alpha\|Rb_2\|_\infty \\
 & \vdots & \\
 & \geq & (1/2)\alpha\|Rb_1\|_\infty + \|C^{-1}f_0\|_\infty + \cdots + \|C^{-1}f_T\|_\infty + 2\alpha\|Rb_{T+1}\|_\infty
\end{eqnarray*}
On the other hand, by choice of $T$, we have
\[ \|C^{-1}f_{T+1}\|_\infty \leq m\alpha\|Rb_{T+1}\|_\infty \leq m\alpha2^{-T}\|Rb_1\|_\infty \leq (1/2)\alpha\|Rb_1\|_\infty \]
Combining the two yields the theorem.

\subsection{Proof of Theorem \ref{almostthm}}

For this preliminary draft, we prove theorem \ref{almostthm} with slightly worse parameters of $\tilde{O}(\alpha^2\eps^{-3}\log^2(n))$, using naive steepest descent.  The better parameters follow from using Nesterov's accelerated gradient method\cite{Nesterov} instead, as will be proved in the final version of this paper.  Of course, for the case of $\alpha = \exp(\tilde{O}(\sqrt{\log n})$, then still $\alpha^2 = \exp(\tilde{O}(\sqrt{\log n}))$, so this naive analysis still yields nearly linear-time flow algorithms.

We approximate $\|\cdot\|_\infty$ using the symmetric softmax function.
\[ \lmax(x) = \log\left(\sum_i e^{x_i} + e^{-x_i}\right) \]

We make use of some elementary facts about $\lmax$.
\begin{fact}\label{lmaxlemma} Let $x,y \in \R^d$.  Then,
\begin{eqnarray}
\|\nabla \lmax(x)\|_1 & \leq & 1 \label{lessone}\\
\nabla \lmax(x)^\top x & \geq & \lmax(x) - \log(2d) \label{probbig} \\
\|\nabla \lmax(x) - \nabla\lmax(y)\|_1 & \leq  &\|x-y\|_\infty \label{lipschitz}
\end{eqnarray}
\end{fact}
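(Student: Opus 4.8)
The plan is to read $\lmax$ as the log-partition function of a distribution over $2d$ outcomes and deduce all three inequalities from that picture. Write $Z(x)=\sum_i\bigl(e^{x_i}+e^{-x_i}\bigr)$, so $\lmax(x)=\log Z(x)$, and index the $2d$ outcomes by pairs $(i,\sigma)$ with $\sigma\in\{+1,-1\}$; put $p_{(i,\sigma)}=e^{\sigma x_i}/Z(x)$, which is a genuine probability distribution $p$. Differentiating, $(\nabla\lmax(x))_j=(e^{x_j}-e^{-x_j})/Z(x)=p_{(j,+)}-p_{(j,-)}$. Inequality \eqref{lessone} is then immediate: $|p_{(j,+)}-p_{(j,-)}|\le p_{(j,+)}+p_{(j,-)}$, and summing over $j$ gives $1$.

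For \eqref{probbig} I would note that $\nabla\lmax(x)^\top x=\sum_j x_j\bigl(p_{(j,+)}-p_{(j,-)}\bigr)$ is exactly the $p$-expectation of the ``signed value'' $a_{(i,\sigma)}=\sigma x_i$, while $\log p_{(i,\sigma)}=a_{(i,\sigma)}-\log Z(x)$. Hence the (nonnegative) Kullback--Leibler divergence of $p$ from the uniform distribution $u$ on the $2d$ outcomes is $D(p\,\|\,u)=\sum_{(i,\sigma)}p_{(i,\sigma)}\bigl(a_{(i,\sigma)}-\log Z(x)+\log(2d)\bigr)=\nabla\lmax(x)^\top x-\lmax(x)+\log(2d)\ge 0$, which rearranges to \eqref{probbig}. (This is the same thing as the Gibbs bound $\log Z\ge\langle a\rangle_p+H(p)$ combined with $H(p)\le\log(2d)$.)

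For \eqref{lipschitz} the plan is to bound the Hessian of $\lmax$ in the $\ell_\infty\!\to\!\ell_1$ operator norm and integrate along the segment from $y$ to $x$: since $\nabla\lmax(x)-\nabla\lmax(y)=\int_0^1\nabla^2\lmax\bigl(y+t(x-y)\bigr)(x-y)\,dt$, it suffices to show $\|\nabla^2\lmax(z)\,v\|_1\le\|v\|_\infty$ for every $z$ and $v$. Viewing $\lmax$ as a log-sum-exp of the linear forms $a_{(i,\sigma)}^\top x$ with $a_{(i,\sigma)}=\sigma e_i$, its Hessian at $z$ equals the covariance matrix $\mathrm{Cov}_p(a)$, and since $aa^\top=e_ie_i^\top$ regardless of $\sigma$, this works out to $\nabla^2\lmax(z)=\mathrm{diag}(r)-\bar a\bar a^\top$, where $r_j=p_{(j,+)}+p_{(j,-)}$ (so $\sum_j r_j=1$) and $\bar a_j=p_{(j,+)}-p_{(j,-)}$ (so $|\bar a_j|\le r_j$). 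By convexity, $\max_{\|v\|_\infty\le1}\|(\mathrm{diag}(r)-\bar a\bar a^\top)v\|_1$ is attained with $v$ and its $\ell_1$-dual vector $w$ both in $\{\pm1\}^d$, so I need $w^\top(\mathrm{diag}(r)-\bar a\bar a^\top)v\le1$. Splitting $\{1,\dots,d\}$ into the indices where $w_jv_j=1$ and where $w_jv_j=-1$ and writing $A,B$ for the partial sums of $w_j\bar a_j$ over those two groups, a short computation gives $w^\top(\mathrm{diag}(r)-\bar a\bar a^\top)v=\bigl(1-2\!\sum_{\mathrm{disagree}}r_j\bigr)-(A^2-B^2)$; since $|B|\le\sum_{\mathrm{disagree}}|\bar a_j|\le\sum_{\mathrm{disagree}}r_j\le1$ forces $B^2\le\sum_{\mathrm{disagree}}r_j$, the right-hand side is at most $1-\sum_{\mathrm{disagree}}r_j-A^2\le1$.

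The one place that needs care is \eqref{lipschitz}, and specifically pinning the Lipschitz constant at exactly $1$: a crude triangle inequality on $\mathrm{diag}(r)-\bar a\bar a^\top$ only yields a constant like $2$, so one must push through the sign-splitting bookkeeping of the quadratic form above. Everything else falls out in one line from the log-partition-function reading of $\lmax$.
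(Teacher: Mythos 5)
The paper states this as a bare \texttt{Fact} with no accompanying proof, so there is nothing to compare against; the question is simply whether your argument is correct, and it is. The Gibbs-distribution reading handles \eqref{lessone} and \eqref{probbig} cleanly, and for \eqref{lipschitz} you correctly identify the Hessian as $\mathrm{diag}(r)-\bar a\bar a^\top$ with $\sum_j r_j=1$ and $|\bar a_j|\le r_j$, reduce the $\ell_\infty\!\to\!\ell_1$ operator norm to a max over $v,w\in\{\pm1\}^d$ by bilinearity, and the sign-split identity $w^\top M v = \bigl(1-2s\bigr)-(A^2-B^2)$ with $s=\sum_{\mathrm{disagree}}r_j$, $|B|\le s\le1$ gives $w^\top M v\le 1-s-A^2\le1$, pinning the constant at exactly $1$ as required.
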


We will approximate problem \eqref{newopt} with the potential function,
\begin{equation}
 \phi(f) = \lmax(C^{-1}f) + \lmax\left(2\alpha R(b-Bf)\right) \label{phi}
\end{equation}

Since equation \eqref{phi} approximates equation \eqref{newopt} to within an additive $\theta(\log n)$, we will be concerned with minimizing $\phi(f)$ after scaling $f,b$ so $\phi(f) = \theta(\eps^{-1}\log n)$.

\mybox{
$\mathtt{AlmostRoute}(b,\eps)$:
\begin{itemize}
\item Initialize $f= 0$, scale $b$ so $2\alpha\|Rb\|_\infty = 16\eps^{-1}\log(n)$.
\item Repeat:
\begin{itemize}
\item While $\phi(f) < 16\eps^{-1}\log(n)$, scale $f$ and $b$ up by $17/16$.
\item Set $\delta \gets  \|C\nabla\phi(f)\|_1$.
\item If $\delta \geq \eps/4$, set $f_e \gets f_e - \frac{\delta}{1+4\alpha^2}\sgn(\nabla\phi(f)_e)c_e$
\item Otherwise, terminate and output $f$ together with the potentials induced by $\nabla\phi(f)$(see below), after undoing any scaling.
\end{itemize}
\end{itemize}
}

Each step requires computing $\nabla \phi(f)$, which requires $O(m)$ time plus a multiplication by $R$ and a multiplication by $R^\top$.  Further, the partial derivative of the residual part for a particular edge is equal to a potential difference between the endpoints of that edge.  When $\phi(f)$ is nearly-optimal, those potentials yield a good dual solution for our original problem.

\begin{lemma}\label{optphi} When $\mathtt{AlmostRoute}$ terminates, we have a flow $f$ and potentials $v$ with,
\[ \|C^{-1}f\|_\infty + 2\alpha \|R(b-Bf)\|_\infty \leq (1+\eps)\frac{b^\top v}{\|CB^\top v\|_1} \]
\end{lemma}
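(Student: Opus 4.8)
The plan is to extract a dual solution from the gradient at termination and verify the claimed inequality directly from the first-order optimality condition $\delta = \|C\nabla\phi(f)\|_1 < \eps/4$, combined with the fact that $\phi(f) \geq 16\eps^{-1}\log n$ at termination (the inner scaling loop guarantees this). Write $\nabla\phi(f) = C^{-1}g_1 - 2\alpha B^\top R^\top g_2$, where $g_1 = \nabla\lmax(C^{-1}f)$ and $g_2 = \nabla\lmax(2\alpha R(b-Bf))$ are the two softmax gradients. Define the candidate dual potentials by $v = 2\alpha R^\top g_2$; then the residual part of the gradient is exactly $-B^\top v$, which is the promised ``potential difference between endpoints'' structure. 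First I would record the two lower bounds coming from fact \ref{lmaxlemma}, inequality \eqref{probbig}: namely $g_1^\top(C^{-1}f) \geq \lmax(C^{-1}f) - \log(2m)$ and $g_2^\top(2\alpha R(b-Bf)) \geq \lmax(2\alpha R(b-Bf)) - \log(2(n-1))$ (or whatever the row-count of $R$ is); adding these gives $g_1^\top C^{-1}f + g_2^\top(2\alpha R(b-Bf)) \geq \phi(f) - O(\log n)$.

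Next I would manipulate the left-hand side of that sum into something involving $b^\top v$. We have $g_2^\top \cdot 2\alpha R(b - Bf) = v^\top(b - Bf) = b^\top v - (Bf)^\top v = b^\top v - f^\top B^\top v$. Meanwhile the edge-step term is $g_1^\top C^{-1} f = (C^{-1}g_1)^\top f$. So the sum becomes $b^\top v + f^\top\bigl(C^{-1}g_1 - B^\top v\bigr) = b^\top v + f^\top \nabla\phi(f)$. The correction term $f^\top\nabla\phi(f)$ is controlled by the termination condition: $|f^\top\nabla\phi(f)| \leq \|C^{-1}f\|_\infty \cdot \|C\nabla\phi(f)\|_1 \leq \|C^{-1}f\|_\infty \cdot \delta < \frac{\eps}{4}\|C^{-1}f\|_\infty \leq \frac{\eps}{4}\phi(f)$, since $\|C^{-1}f\|_\infty \leq \lmax(C^{-1}f) \leq \phi(f)$. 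Thus $b^\top v \geq \phi(f) - O(\log n) - \frac{\eps}{4}\phi(f)$, and using $\phi(f) \geq 16\eps^{-1}\log n$ the $O(\log n)$ additive error is absorbed into another $O(\eps)\phi(f)$ term, giving $b^\top v \geq (1 - O(\eps))\phi(f)$.

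Then I would bound $\|CB^\top v\|_1$ from above by $1$ (up to lower-order terms), which is what makes $v/\|CB^\top v\|_1$ a feasible point for \eqref{maxopt} and lets us write $\frac{b^\top v}{\|CB^\top v\|_1} \geq b^\top v$. Here $CB^\top v = CB^\top \cdot 2\alpha R^\top g_2 = 2\alpha (RBC)^\top g_2$, so $\|CB^\top v\|_1 \leq 2\alpha \|(RBC)^\top\|_{\infty\to 1}\,\|g_2\|_\infty \cdot(\text{something})$ — more carefully, $\|CB^\top v\|_1 = \|2\alpha(RBC)^\top g_2\|_1 \leq 2\alpha \|RBC\|_{\infty\to\infty}\|g_2\|_1 \leq 2\alpha\|g_2\|_1 \leq 2\alpha$ using $\|RBC\|_{\infty\to\infty}\leq 1$ (the congestion-approximator bound $\|RBC\|_{\infty\to\infty}\leq 1$, equivalently $\|RBf\|_\infty \leq \opt(f$-as-demand$) \cdot\|C^{-1}f\|_\infty^{-1}$... the clean statement is $\|Rb\|_\infty \le \opt(b)$ applied to the flow given by an edge's capacity) and \eqref{lessone}. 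Hmm — this over-counts by the factor $2\alpha$; the actual argument must instead show $\|C\nabla\phi(f)\|_1$ small forces the \emph{two} gradient pieces to nearly cancel, so that $C^{-1}g_1 \approx B^\top v$ on the support, whence $\|CB^\top v\|_1 = \|g_1 - C\nabla\phi(f)\|_1 \leq \|g_1\|_1 + \delta \leq 1 + \eps/4$. That is the right route: $CB^\top v = g_1 - C\nabla\phi(f)$ by definition of $\nabla\phi$, so $\|CB^\top v\|_1 \leq 1 + \eps/4$ directly from \eqref{lessone} and $\delta<\eps/4$.

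Finally, assemble: $\frac{b^\top v}{\|CB^\top v\|_1} \geq \frac{(1-O(\eps))\phi(f)}{1+\eps/4} \geq (1 - O(\eps))\phi(f) \geq (1-O(\eps))\bigl(\|C^{-1}f\|_\infty + 2\alpha\|R(b-Bf)\|_\infty\bigr)$, where the last step is just $\lmax(x)\geq\|x\|_\infty$ applied to each piece. Rescaling $\eps$ by a constant at the outset turns $(1-O(\eps))$ into $(1+\eps)^{-1}$, i.e. $\|C^{-1}f\|_\infty + 2\alpha\|R(b-Bf)\|_\infty \leq (1+\eps)\frac{b^\top v}{\|CB^\top v\|_1}$, as claimed. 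The step I expect to be most delicate is the bookkeeping that turns the additive $O(\log n)$ gap from \eqref{probbig} into a multiplicative $(1+O(\eps))$ factor: this is exactly where the normalization $\phi(f) \geq 16\eps^{-1}\log n$ enforced by the scaling loop is used, and one has to be careful that $\log(2m)$ and the row-count of $R$ are both $O(\log n)$ so the constant $16$ (versus, say, the $\eps/4$ threshold and the $1/(1+4\alpha^2)$ step size) is large enough for the arithmetic to close with room to spare.
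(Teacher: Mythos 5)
Your proposal is correct and follows essentially the same route as the paper: extract $v$ from the residual softmax gradient, use inequality \eqref{probbig} plus the normalization $\phi(f)\geq 16\eps^{-1}\log n$ to turn the additive $O(\log n)$ gap into a $(1-\eps/4)$ factor, use the termination condition $\delta<\eps/4$ together with H\"older to control the correction term $f^\top\nabla\phi(f)$, and bound $\|CB^\top v\|_1\leq 1+\delta$ by writing $CB^\top v$ as $p_1$ minus $C\nabla\phi(f)$. The only cosmetic difference is that your $v=2\alpha R^\top g_2$ carries the extra factor $2\alpha$ relative to the paper's $v=R^\top p_2$, which is immaterial since only the ratio $b^\top v/\|CB^\top v\|_1$ appears; and you correctly abandoned your first (overcounting) attempt at bounding $\|CB^\top v\|_1$ in favor of the cancellation argument, which is exactly what the paper does.
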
 
\begin{proof} Set $x_1 = C^{-1}f$, $x_2 = 2\alpha R(b-Bf)$, and $p_i = \nabla \lmax (x_i)$.  Set $v = R^\top p_2$ to be our potentials.  Observe that $\nabla \phi(f) = C^{-1}p_1 -2\alpha B^\top v$.  First, equation \eqref{lessone} yields \[2\alpha\|CB^\top v\|_1 \leq \|p_1\|_1 + \|p_1-2\alpha CB^\top v\|_1 \leq 1 + \delta \]
By equation \eqref{probbig}, using the fact that $C$ and $R$ have at most $n^2/2$ rows and $\phi(f) \geq 16\eps^{-1}\log n$,
\[
 p_1^\top C^{-1}f + 2\alpha p_2^\top R(b-Bf) \geq \phi(f) - 4\log n \geq \phi(f)(1-\eps/4)
\]
On the other hand,
\begin{eqnarray*}
\delta \phi(f) & \geq & \|C\nabla\phi(f)\|_1\|C^{-1}f\|_\infty \\
 & \geq & \nabla\phi(f)^\top f \\
 & = & (C^{-1}p_1-2\alpha B^T R^Tp_2)^\top f \\
 & = & p_1^\top C^{-1}f -2\alpha p_2^\top RBf
 \end{eqnarray*}
Combining the two yields,
\[ 2\alpha b^\top v \geq \phi(f)(1-\eps/4-\delta) \]

Altogether, using the fact that $\delta < \eps/4$ at termination, we have
\[ \frac{b^\top v}{\|CB^\top v\|_1} \geq \frac{\phi(f)(1-\eps/2)}{1+\eps/4} \geq \frac{\phi(f)}{1+\eps} \]
Observing that $\phi(f)$ overestimates $\|C^{-1}f\|_\infty + 2\alpha \|R(b-Bf)\|_\infty$ completes the proof.
\end{proof}

\begin{lemma} $\mathtt{AlmostRoute}$ terminates after at most $\tilde{O}(\alpha^2 \eps^{-3}\log n)$ iterations.
\end{lemma}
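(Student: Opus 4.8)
The plan is to run the textbook steepest-descent analysis for a function with Lipschitz gradient, the one wrinkle being that the rescaling steps make $\phi$ non-monotone, so I group the descent steps into ``phases'' delimited by rescalings and bound the number of phases separately. The first ingredient is a quadratic upper bound on $\phi$: writing $g=\nabla\phi(f)$, I would show that for every $\Delta f$,
\[ \phi(f+\Delta f)\ \le\ \phi(f) + g^\top\Delta f + \tfrac{1+4\alpha^2}{2}\,\|C^{-1}\Delta f\|_\infty^2 . \]
This follows by applying the standard consequence of \eqref{lipschitz}, namely $\lmax(a+h)\le\lmax(a)+\nabla\lmax(a)^\top h+\tfrac12\|h\|_\infty^2$, to the two $\lmax$ terms of \eqref{phi} with $h=C^{-1}\Delta f$ and $h=-2\alpha RB\Delta f$ respectively; the two first-order terms reassemble into $g^\top\Delta f$ exactly as in the proof of Lemma~\ref{optphi}, and $\|2\alpha RB\Delta f\|_\infty\le2\alpha\|C^{-1}\Delta f\|_\infty$ because $\|RBC\|_{\infty\to\infty}\le1$.

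Substituting the algorithm's step $\Delta f=-\tfrac{\delta}{1+4\alpha^2}\,C\,\sgn(g)$, for which $g^\top\Delta f=-\tfrac{\delta^2}{1+4\alpha^2}$ and $\|C^{-1}\Delta f\|_\infty\le\tfrac{\delta}{1+4\alpha^2}$, the quadratic bound gives a decrease $\phi(f)-\phi(f+\Delta f)\ge\tfrac{\delta^2}{2(1+4\alpha^2)}$. Since a descent step is taken only when $\delta\ge\eps/4$, every descent step lowers $\phi$ by at least $\tfrac{\eps^2}{32(1+4\alpha^2)}=\Omega(\eps^2/\alpha^2)$. Conversely, convexity of $\phi$ gives $\phi(f)-\phi(f+\Delta f)\le-g^\top\Delta f=\tfrac{\delta^2}{1+4\alpha^2}$, and $\delta=\|Cg\|_1\le\|\nabla\lmax(C^{-1}f)\|_1+2\alpha\|(RBC)^\top\nabla\lmax(2\alpha R(b-Bf))\|_1\le1+2\alpha$ by \eqref{lessone} and $\|RBC\|_{\infty\to\infty}\le1$, so one descent step lowers $\phi$ by at most $\tfrac{(1+2\alpha)^2}{1+4\alpha^2}\le2$. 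Hence whenever the inner while loop is entered $\phi$ lies within $2$ of $16\eps^{-1}\log n$, and a short calculation from \eqref{probbig} shows one rescaling by $17/16$ restores $\phi\ge16\eps^{-1}\log n$ (for $\eps\le1/2$); so each outer iteration performs at most one rescaling. Since rescalings fire only while $\phi<16\eps^{-1}\log n$ and each multiplies $\phi$ by at most $17/16$, while $\phi$ starts at $\phi(0)=\log(2m)+\lmax(2\alpha Rb)=O(\eps^{-1}\log n)$ and otherwise only decreases, $\phi$ stays $O(\eps^{-1}\log n)$ throughout the run.

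Now group the descent steps into phases, a new phase beginning right after each rescaling. Within a phase only descent steps occur, so $\phi$ decreases monotonically from $O(\eps^{-1}\log n)$ while staying nonnegative, and by the per-step decrease bound a phase contains at most $O(\alpha^2\eps^{-3}\log n)$ steps. It remains to bound the number of phases, i.e.\ the number of rescalings. The key point is that $\phi(f)\ge\|C^{-1}f\|_\infty+2\alpha\|R(b-Bf)\|_\infty\ge\opt(b)$ for the current (rescaled) demand $b$, the second inequality because splicing $f$ with an $\alpha$-approximate routing of the residual $b-Bf$ yields a routing of $b$. After the initial normalization $2\alpha\|Rb\|_\infty=16\eps^{-1}\log n$, so $\opt(b)\ge\|Rb\|_\infty=8\eps^{-1}\log n/\alpha$; after $k$ further rescalings the demand is $(17/16)^k b$ with optimum $(17/16)^k\opt(b)$, which exceeds $16\eps^{-1}\log n$ once $(17/16)^k\ge2\alpha$. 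From that moment $\phi\ge16\eps^{-1}\log n$ holds permanently, so no further rescaling can occur, and therefore there are at most $\log_{17/16}(2\alpha)+O(1)=O(\log\alpha)$ phases.

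Multiplying the two bounds, the total number of descent steps, hence of iterations, is $O(\log\alpha)\cdot O(\alpha^2\eps^{-3}\log n)=\tilde O(\alpha^2\eps^{-3}\log n)$, as claimed. I expect the main obstacle to be precisely the phase-counting argument: the per-step estimates are routine given Fact~\ref{lmaxlemma} and $\|RBC\|_{\infty\to\infty}\le1$, but because rescaling repeatedly inflates $\phi$ one needs the less obvious observation that $\phi$ dominates $\opt$ of the current demand, combined with the fact that the initial normalization places $\opt(b)$ within a factor $\alpha$ of the softmax scale $\Theta(\eps^{-1}\log n)$, to conclude that only $O(\log\alpha)$ rescalings can ever happen; without this, the non-monotonicity of $\phi$ would leave the iteration count unbounded.
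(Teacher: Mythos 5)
Your proof is correct and follows essentially the same route as the paper's: the quadratic descent bound derived from the Lipschitz property of $\nabla\lmax$ together with $\|RBC\|_{\infty\to\infty}\le 1$, a per-step drop of $\Omega(\eps^2\alpha^{-2})$, phases delimited by rescalings with $O(\eps^{-1}\log n)$ potential gain per rescaling, and $O(\log\alpha)$ rescalings total because the initial normalization places $\opt(b)$ within a factor $\alpha$ of the softmax scale. You merely flesh out two points the paper states tersely — that $\phi(f)$ always dominates $\opt$ of the current (rescaled) demand, which underlies the $O(\log\alpha)$ phase count, and that the maintained invariant keeps $\phi=O(\eps^{-1}\log n)$ throughout — but the underlying argument is the same.
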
 \begin{proof}  Let us call the iterations between each scaling a \emph{phase}.  Since $\|Rb\|_\infty$ gives us the correct scale to within factor $\alpha$, we will scale at most $O(\log \alpha)$ times.

Let $h_e = -\frac{\delta}{1+4\alpha^2}\sgn(\nabla_f \phi(f)_e)c_e$ be our step.  Then, equation \eqref{lipschitz}, together with the fact that $\|RBC\|_{\infty \to \infty} \leq 1$ for a congestion-approximator $R$ yields,
\begin{eqnarray*}
\phi(f + h) & \leq & \phi(f) + \nabla \phi(f)^\top h + \frac{1+4\alpha^2}{2}\|C^{-1}h\|_\infty^2 \\
 & = & \phi(f) - \frac{\delta^2}{2+8\alpha^2} \\
 & = & \phi(f) - \Omega(\eps^2\alpha^{-2})
\end{eqnarray*}

Since we raised $\phi(f)$ by at most $\eps^{-1}\log n$ when scaling, and each step drops $\phi(f)$ by at least $\Omega(\eps^2\alpha^{-2})$, there can be at most $O(\alpha^2\eps^{-3}\log n)$ steps between phases.
\end{proof}

\section{\label{construct}Computing Congestion-Approximators}

In this section we prove theorem \ref{constructthm}, using a construction of Madry\cite{Approxcut}, itself based on a construction of Spielman and Teng\cite{ST}.
\begin{defn}[Madry\cite{Approxcut}] A $j$-tree is a graph formed by the union of a forest with $j$ components, together with a graph $H$ on $j$ vertices, one from each component.  The graph $H$ is called the core.
\end{defn}

\begin{theorem}[Madry\cite{Approxcut}] \label{madrythm}
For any graph $G$ and $t \geq 1$, we can find in time $\tilde{O}(tm)$ a distribution of $t$ graphs $(\lambda_i, G_i)$ such that,
\begin{itemize}
\item Each $G_i$ is a $O(m \log m/t)$-tree, with a core containing at most $m$ edges.
\item $G_i$ dominates $G$ on all cuts.
\item $\sum_i \lambda_i G_i$ can be routed in $G$ with congestion $\tilde{O}(\log n)$.
\end{itemize}
\end{theorem}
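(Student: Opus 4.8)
I would note first that this statement is quoted from \cite{Approxcut}, so in the paper it is presumably used as a black box; what follows is how I would reconstruct its proof. The two external ingredients I would rely on are (i) a low-stretch probabilistic tree embedding: for any weighted $G$ on $n$ vertices and $m$ edges one can in $\tilde{O}(m)$ time sample a spanning tree $T$ so that, measuring lengths by $1/c_e$, every edge $e=(u,v)$ has $\E[\operatorname{dist}_T(u,v)] \leq \tilde{O}(\log n)/c_e$, i.e. the total expected stretch is $\tilde{O}(m\log n)$; and (ii) the Spielman--Teng ultrasparsifier sampling step, which given $T$ keeps each off-tree edge $e$ independently with probability $p_e$ proportional (up to logs) to its $T$-stretch and reweights it to $c_e/p_e$, so that, tuned by $t$, only $\tilde{O}(m/t)$ off-tree edges survive in expectation.

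The plan is to build the distribution from a single randomized ``one-scale'' reduction, repeated $t$ times. Fix $G$ and $t$. Sample $T$ as above and scale it up by a factor $\Theta(\operatorname{polylog} n)$ chosen large enough that $T$ alone dominates every cut of $G$ (each off-tree edge's contribution to a cut is charged to the scaled $T$-path it crosses, and the stretch bound makes this charge uniformly bounded). Then run the ultrasparsifier sampling to add back a reweighted set $F$ of $\tilde{O}(m/t)$ off-tree edges, obtaining $H$. Contracting every maximal subtree of $T$ incident to no edge of $F$ leaves an ``interesting part'' — the Steiner subtree of $T$ spanning the endpoints of $F$ together with $F$ — having only $\tilde{O}(m/t)$ branch and leaf vertices; declaring that the core and the surviving pendant paths the forest exhibits $H$ as an $O(m\log m/t)$-tree. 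Domination on cuts is then immediate from the scaling, and $G_i := H$ dominates $G$ \emph{deterministically}. For the routing direction I would bound $\E[\operatorname{cong}_G(H)]$: route each scaled tree edge and each reweighted edge of $F$ along itself in $G$; on a fixed edge $e$ of $G$ the expected load from $F$ is $(c_e/p_e)\cdot p_e = c_e$, and the expected load from the scaled tree is $c_e$ times the scale factor times $\Pr[e\in T]$, which the stretch guarantee forces to be $\tilde{O}(c_e\log n)$.

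Finally I would assemble the distribution by taking $t$ independent samples $G_1,\dots,G_t$ of $H$ with $\lambda_i = 1/t$; each costs $\tilde{O}(m)$ (one low-stretch tree plus a linear-time sampling pass), so $\tilde{O}(tm)$ total. Each $G_i$ dominates $G$ on all cuts, and $\sum_i \lambda_i G_i$ is the empirical mean of $H$, which by a Chernoff bound on each fixed cut and a union bound over all cuts (an extra logarithmic factor) matches $\E[H]$ on every cut up to a constant; hence $\sum_i \lambda_i G_i$ routes in $G$ with congestion $\tilde{O}(\log n)$.

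The main obstacle is the tension inside the one-scale step: domination over \emph{all} cuts pushes the scale factor up, while low average congestion pushes it down and demands that the $\tilde{O}(m/t)$ surviving shortcut edges be exactly the ones repairing the cuts the scaled tree gets badly wrong — and all of this must keep the core at $\tilde{O}(m/t)$ vertices. Reconciling these three demands is precisely what the combination of a low-stretch tree and stretch-proportional ultrasparsifier sampling buys; the remaining pieces (the contraction bookkeeping that produces the $j$-tree structure, the union bound over cuts, and the $\tilde{O}(m)$ cost of the low-stretch tree routine) are routine.
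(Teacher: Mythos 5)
You are reconstructing a result that the paper itself cites as a black box, which is fair, but the reconstruction has a genuine gap at its heart. The claim that one can ``scale [$T$] up by a factor $\Theta(\operatorname{polylog} n)$ chosen large enough that $T$ alone dominates every cut of $G$'' is false, and the rest of the argument inherits the error. For a spanning tree $T$ to dominate $G$ on cuts, each tree edge $e'$ must be scaled to (at least) $\mathrm{load}(e') := \sum_{e'': e' \in T\text{-path}(e'')} c_{e''}$, the total capacity of $G$-edges routed over $e'$. The low-stretch guarantee bounds $\sum_{e'} \mathrm{load}(e')/c_{e'}$ (the total stretch, $\tilde O(m\log n)$), not the maximum; individual scaling factors $\mathrm{load}(e')/c_{e'}$ can be polynomial even for a good LSST. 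Consequently your congestion estimate --- ``the expected load from the scaled tree is $c_e$ times the scale factor times $\Pr[e \in T]$, which the stretch guarantee forces to be $\tilde O(c_e \log n)$'' --- bounds the wrong quantity. The per-edge load $\E[\mathrm{load}(e)\cdot\mathbf 1[e\in T]]/c_e$ has $\tilde O(\log n)$ \emph{average} over edges $e$ but no useful per-edge bound; you are conflating the stretch of a $G$-edge (indexed by $e''$) with the congestion of a tree edge (indexed by $e'$), and only the former is controlled by FRT/AKPW-type guarantees.

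This is precisely why the known proofs (R\"{a}cke's 2008 decomposition, inherited by Madry) do not use independent sampling. They run a multiplicative-weights iteration: after building each tree, penalize the $G$-edges that received high load, so the next LSST is built with respect to the updated lengths and steers around them. The MWU regret bound is what converts ``average congestion over edges is $\tilde O(\log n)$'' into ``max congestion of the averaged tree is $\tilde O(\log n)$.'' Taking $t$ \emph{i.i.d.} copies and applying Chernoff plus a union bound over cuts cannot substitute for this: concentration only drives the empirical mean toward $\E[H]$, and $\E[H]$ itself already fails the congestion bound (its per-edge congestion in $G$ is the unbounded quantity above). Separately, the union bound over exponentially many cuts would need a cut-counting argument, but that is a secondary issue. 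The $j$-tree extraction step (contract the parts of $T$ not adjacent to surviving off-tree edges) is the right idea and roughly matches Madry, as is the Spielman--Teng-style stretch-proportional sampling to thin the core; the missing ingredient is the adaptive reweighting between tree constructions. Finally, note that the paper's Appendix \ref{madryfix} further modifies the length function to $l'(e) = (l(e) + c(e)^{-1})/2$ to eliminate the capacity-ratio dependence in Madry's statement, a point your sketch does not touch.
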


We briefly remark that while the statement of theorem \ref{madrythm} in \cite{Approxcut} contains an additional logarithmic dependence on the \emph{capacity-ratio} of $G$, that dependence is easily eliminated.  We elaborate further in appendix \ref{madryfix}.  Our construction will simply apply theorem \ref{madrythm} recursively, sparsifying the core on each iteration.  To accomplish that,  we use an algorithm of Bencz\'{u}r and Karger\cite{Sparsify}.
\begin{theorem}[Bencz\'{u}r, Karger\cite{Sparsify}]\label{sparsethm}
There is an algorithm $\mathtt{Sparsify}(G,\eps)$ that, given a graph $G$ with $m$ edges, takes $\tilde{O}(m)$ time and returns a graph $G'$ with $m' = O(n\eps^{-2} \log n)$ edges such that the capacity of cuts in the respective graphs satisfy
\[ G \leq G' \leq (1+\eps)G \]
 Further, the edges of $G'$ are scaled versions of a subset of edges in $G$, with no edge scaled by more than $(1+\eps)m/m'$.
\end{theorem}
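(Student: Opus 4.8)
The plan is to prove Theorem~\ref{sparsethm} via the original random-sampling construction of Bencz\'{u}r and Karger: sample each edge of $G$ independently with probability inversely proportional to how strongly connected its endpoints are, rescale each surviving edge by the reciprocal of that probability so that every cut is preserved \emph{in expectation}, and then argue concentration simultaneously over all cuts. It is cleanest to first produce a raw sample $\hat G$ satisfying the two-sided bound $(1-\eps/3)G \le \hat G \le (1+\eps/3)G$ on all cuts, and then output $G' = (1-\eps/3)^{-1}\hat G$; this gives $G \le G' \le \frac{1+\eps/3}{1-\eps/3}G \le (1+\eps)G$ for $\eps \le 1/2$, keeps $G'$ a reweighted subset of the edges of $G$, leaves the edge count unchanged, and multiplies every edge-scaling factor by only $1+O(\eps)$. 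So it suffices to prove the two-sided cut bound together with the edge-count and per-edge scaling bounds.

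\medskip
\noindent\emph{Strengths and the sampling rule.} For an edge $e$, let its \emph{strength} $k_e$ be the largest $k$ such that the two endpoints of $e$ lie in a common vertex-induced subgraph every cut of which has capacity at least $k$ (a ``$k$-strong'' subgraph); note $k_e \ge c_e$, witnessed by the single-edge subgraph. I will use the key combinatorial inequality of \cite{Sparsify}, that $\sum_e c_e/k_e \le n-1$ (proved by induction on $|V|$, contracting the maximal strong subgraphs at the smallest occurring strength level and charging each contracted edge to a distinct eliminated vertex). Fix $\rho = \Theta(\eps^{-2}\log n)$, put $p_e = \min\{1,\ \rho c_e/k_e\}$, include each $e$ in $\hat G$ independently with probability $p_e$, and give an included edge capacity $c_e/p_e$. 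Then $\E[m'] = \sum_e p_e = \rho\!\!\sum_{e:\,p_e<1}\!\! c_e/k_e \;+\; |\{e:p_e=1\}|$; the first term is at most $\rho(n-1)$, and since every edge with $p_e=1$ has $c_e/k_e \ge 1/\rho$ the second term is at most $\rho\sum_e c_e/k_e \le \rho(n-1)$ as well, so $\E[m'] \le 2\rho(n-1) = O(n\eps^{-2}\log n)$; a Chernoff bound makes $m'$ concentrate around its mean, and re-sampling whenever $m'$ exceeds twice its mean costs only $O(1)$ expected repetitions, so the edge-count bound may be taken to hold deterministically (if $G$ is already this sparse, just return $G$).

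\medskip
\noindent\emph{Cut preservation --- the crux.} Fix a cut $(S,V\setminus S)$; its capacity in $\hat G$ is $\sum_{e \in \partial S} Y_e\, c_e/p_e$ with the $Y_e \sim \mathrm{Bern}(p_e)$ independent and mean exactly $c_S$, so a Chernoff bound gives deviation probability $\le 2\exp(-\Omega(\eps^2 c_S/M))$, where $M = \max_{e\in\partial S} c_e/p_e$ is the largest rescaled capacity of a crossing edge. A plain union bound over the $2^n$ cuts fails because $M$ can be as large as $c_S$. The resolution, due to Bencz\'{u}r--Karger, combines two ingredients. First, Karger's cut-counting bound: a graph on $N$ vertices with minimum cut $\lambda$ has at most $N^{2\alpha}$ cuts of capacity $\le \alpha\lambda$. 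Second, a decomposition by strength: bucket the edges into $O(\log n)$ geometric classes $B_i = \{e : 2^i \le k_e < 2^{i+1}\}$. A sampled edge of $B_i$ has rescaled capacity $c_e/p_e = k_e/\rho \in [2^i/\rho,\ 2^{i+1}/\rho)$, so within a class the rescaled capacities are within a factor $2$ of $2^i/\rho$; moreover each $B_i$-edge lies in a maximal $2^i$-strong subgraph $H$ (as maximal strong subgraphs at a fixed level are pairwise disjoint, the level-$i$ subgraphs contain all $B_i$-edges internally), and inside $H$ every cut has capacity $\ge 2^i$. For a global cut, each $B_i$-edge it crosses crosses the induced cut of its own $H$, whose capacity is $\ge 2^i$ and is itself part of $c_S$; applying the Chernoff bound inside $H$ at scale $2^i$ with rescaled capacities $\Theta(2^i/\rho)$ and union-bounding over the $\le N_H^{2\alpha} \le n^{2\alpha}$ cuts of $H$ of capacity $\le \alpha 2^i$, the choice $\rho = \Theta(\eps^{-2}\log n)$ controls the deviation of each bucket's contribution, simultaneously over all cuts, all subgraphs, and all scales, with probability $1 - n^{-\Omega(1)}$. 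Summing over the $O(\log n)$ buckets, with a careful (not term-by-term) allocation of the $\eps$-budget across strength scales so as not to shed a further logarithmic factor, yields $(1-\eps)c_S \le c_S^{\hat G} \le (1+\eps)c_S$ for every cut at once, with high probability.

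\medskip
\noindent\emph{Running time and scaling; the obstacle.} Exact strengths are costly, but $\tilde O(m)$-time estimates $\tilde k_e$ with $\tilde k_e \le k_e \le \mathrm{polylog}(n)\cdot\tilde k_e$ suffice, the distortion being absorbed into $\rho$; these come from iteratively peeling Nagamochi--Ibaraki maximal spanning forests (sparse $k$-connectivity certificates), with edge capacities handled in $O(\log n)$ geometric groups, for $\tilde O(m)$ total, and the sampling itself is $O(m)$. A surviving edge is rescaled by $1/p_e = \max\{1,\ k_e/(\rho c_e)\}$; a bucketing argument on strength classes, again using $\sum_e c_e/k_e \le n-1$ to bound how many edges can carry a given strength-to-capacity ratio, shows this is $O(m/m')$, which becomes $(1+\eps)m/m'$ after the final rescaling to the one-sided form. \textbf{The main obstacle is the cut-preservation step.} The naive Chernoff-plus-union bound over-samples by a polynomial factor, and it is precisely the interplay of Karger's cut-counting bound with the strength decomposition --- which guarantees that at each strength scale the rescaled edge capacities are small compared with the local minimum cut, so that a Chernoff bound inside each maximal strong subgraph beats that subgraph's cut count --- that brings the required oversampling down to $\tilde O(1)$.
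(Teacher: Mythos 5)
The paper does not prove Theorem~\ref{sparsethm} at all --- it is imported verbatim from Bencz\'ur--Karger \cite{Sparsify} --- so there is no in-paper proof to compare against; what you have written is a reconstruction of the original argument. For the main claims (edge count $O(n\eps^{-2}\log n)$, the two-sided cut bound, and $\tilde O(m)$ running time) your sketch follows the standard Bencz\'ur--Karger route and is sound at the level of a sketch: strengths $k_e$, the inequality $\sum_e c_e/k_e \le n-1$, Chernoff within maximal strong components combined with Karger's cut-counting bound to beat the union bound, and Nagamochi--Ibaraki certificates for fast strength estimation. The reduction from the one-sided normalization $G \le G' \le (1+\eps)G$ to a symmetric $(1\pm\eps/3)$ bound is also fine.

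The one genuine gap is the final clause, that no edge is scaled by more than $(1+\eps)m/m'$. Your proposed justification (``a bucketing argument on strength classes, using $\sum_e c_e/k_e \le n-1$, shows the scaling is $O(m/m')$'') does not work: the bound $\sum_e c_e/k_e \le n-1$ controls an \emph{average}, not the maximum of $k_e/c_e$ over individual edges. Concretely, take a unit-capacity path on $n-s$ vertices with a unit-capacity clique on $s = n^{1/4}$ vertices attached. Then $m = \Theta(n)$, every path edge survives with $p_e = 1$, so $m' = \Theta(n)$ and $m/m' = \Theta(1)$; but a clique edge has $k_e/c_e = \Theta(s)$ and is rescaled by $k_e/(\rho c_e) = \Theta(n^{1/4}/(\eps^{-2}\log n)) \gg 1$. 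So strength-based compression as you (and Bencz\'ur--Karger) define it simply does not satisfy the per-edge scaling clause as stated; delivering it would require a modified sampling rule (e.g., flooring all $p_e$ at roughly $m'/m$, which then needs a separate argument that the edge count is preserved). You should either prove the clause for a modified construction or note that it fails for the vanilla one. Mitigating this: as far as I can tell the paper never actually invokes that clause --- the only properties used downstream (in the analysis of $\mathtt{ComputeTrees}$) are $H_i \le H_i' \le 2H_i$ on cuts, the edge count, and the running time, with routability of $H_i'$ in $H_i$ obtained from the flow--cut gap rather than from per-edge scaling --- so the gap is in a vestigial part of the statement rather than in anything the main theorems depend on.
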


We now present the algorithm for computing the data structure representing a congestion-approximator.  The algorithm $\mathtt{ComputeTrees}$ assumes its input is sparse; our top-level data-structure is constructed by invoking $\mathtt{ComputeTrees}(\mathtt{Sparsify}(G,1), n^{1/k})$, where $k$ is the parameter of theorem \ref{constructthm}.

\mybox{
$\mathtt{ComputeTrees}(G,t)$:
\begin{itemize}
\item If $n=1$, return.
\item Using theorem \ref{madrythm}, compute distribution $(\lambda_i, G_i)_i^{t'}$ of $\max(1,n/t)$-trees.  
\item Pick the $t$ graphs of largest $\lambda_i$, throw away the rest, and scale the kept $\lambda_i$ to sum to $1$.
\item For $i = 1,\ldots,t$:
\begin{itemize}
\item $H_i' \gets \mathtt{Sparsify}(H_i, 1)$, where $H_i$ is the core of $G_i$
\item $L_i \gets \mathtt{ComputeTrees}(H_i', t)$
\end{itemize}
\item Return the list $L = (\lambda_i, F_i, L_i)_{i=1}^t$ where $F_i$ is the forest of $G_i$.
\end{itemize}
}

The analysis of $\mathtt{ComputeTrees}$ correctness will make use of another algorithm for sampling trees.  The $\mathtt{SampleTree}$ procedure is only used for analysis, and is not part of our flow algorithm.

\mybox{
$\mathtt{SampleTree}(L = (\lambda_i, F_i, T_i)_i^t)$:
\begin{itemize}
\item Pick $i$ with probability $\lambda_i$.
\item Output $F_i+\mathtt{SampleTree}(T_i)$
\end{itemize}
}

\begin{lemma} Let $G$ have $\tilde{O}(n)$ edges, and set $L \gets \mathtt{ComputeTrees}(G, t)$.   Then, every tree in the sample space of $\mathtt{SampleTree}(L)$ dominates $G$ on all cuts, and $\E[\mathtt{SampleTree}(L)]$ is routable in $G$ with congestion $\log(n)^{\log(n)/log(t)}$.  Further, the computation of $L$ takes $\tilde{O}(tn)$ time.
\end{lemma}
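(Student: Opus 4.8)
The plan is to analyze $\mathtt{ComputeTrees}$ by induction on the recursion depth, which is $O(\log n / \log t)$ since the core $H_i'$ of an $\max(1,n/t)$-tree has $\tilde O(n/t)$ vertices after sparsification, so the vertex count drops by a factor $t$ at each level until it reaches $1$. I would track three quantities down the recursion: (i) that every sampled tree dominates $G$ on all cuts, (ii) the congestion with which $\E[\mathtt{SampleTree}(L)]$ routes back into $G$, and (iii) the running time. For (i), a sampled tree is $F_i + \mathtt{SampleTree}(L_i)$ where $L_i = \mathtt{ComputeTrees}(H_i', t)$. By Theorem~\ref{madrythm}, $G_i = F_i + H_i$ dominates $G$ on all cuts; by Theorem~\ref{sparsethm}, $H_i' \geq H_i$ on all cuts; and by induction every tree in the sample space of $\mathtt{SampleTree}(L_i)$ dominates $H_i'$ on all cuts. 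Composing these (and noting the forest $F_i$ is disjoint from the core part so cut capacities add appropriately) gives that $F_i + \mathtt{SampleTree}(L_i)$ dominates $F_i + H_i = G_i \geq G$ on all cuts.

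For (ii), the key point is that congestion multiplies across levels. At the top level, $\sum_i \lambda_i G_i = \sum_i \lambda_i (F_i + H_i)$ routes in $G$ with congestion $\tilde O(\log n)$ by Theorem~\ref{madrythm} — I need to check that discarding all but the $t$ largest $\lambda_i$ and renormalizing does not hurt by more than a constant (the discarded weight is at most a constant fraction if $t$ is, say, a large enough constant times the number $t'$ of trees Madry produces divided by\dots — actually since $t' = \tilde O(m/(n/t)) = \tilde O(t)$ up to logs when $m = \tilde O(n)$, keeping the top $t$ and renormalizing only inflates congestion by a $\mathrm{polylog}(n)$ factor, which gets absorbed). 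Now $\E[\mathtt{SampleTree}(L)] = \sum_i \lambda_i (F_i + \E[\mathtt{SampleTree}(L_i)])$. By induction $\E[\mathtt{SampleTree}(L_i)]$ routes in $H_i'$ with congestion $\log(n)^{\log(n/t)/\log t}$, and $H_i'$ routes in $H_i$ with congestion $O(1)$ by Theorem~\ref{sparsethm} (cuts within $(1+1)$, hence congestion within a constant by max-flow/min-cut on the single-commodity\dots more carefully: $H_i' \leq 2 H_i$ on cuts means any flow in $H_i'$ can be rerouted in $H_i$ at $O(1)$ congestion, by the approximate max-flow-min-cut theorem for undirected graphs — or just bound it via the $\log n$ flow-cut gap, which only adds another $\mathrm{polylog}$ factor). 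So $\E[\mathtt{SampleTree}(L_i)]$ routes in $H_i$ with congestion $\mathrm{polylog}(n)\cdot\log(n)^{\log(n/t)/\log t}$, hence $F_i + \E[\mathtt{SampleTree}(L_i)]$ routes in $F_i + H_i = G_i$ with that congestion, and averaging with weights $\lambda_i$ and composing with the top-level routing of $\sum_i \lambda_i G_i$ in $G$ at congestion $\tilde O(\log n)$ gives total congestion $\tilde O(\log n) \cdot \mathrm{polylog}(n) \cdot \log(n)^{\log(n/t)/\log t} \leq \log(n)^{\log(n)/\log t}$, as claimed, provided the per-level $\mathrm{polylog}$ overhead is $\log(n)^{O(1)}$ and there are $O(\log n/\log t)$ levels so the cumulative overhead is $\log(n)^{O(\log n/\log t)}$ — here one has to be slightly careful that the base of the exponent stays $\log n$; I would state the bound with an implicit constant in the exponent, matching the $\log(n)^{O(k)}$ in Theorem~\ref{constructthm}.

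For (iii), the recurrence is $T(n) = \tilde O(tn) + t \cdot T(\tilde O(n/t))$: the call to Theorem~\ref{madrythm} with parameter $t'=\tilde\Theta(t)$ costs $\tilde O(t'm) = \tilde O(tn)$ since $m = \tilde O(n)$, the $t$ sparsifications cost $\tilde O(t \cdot (n/t)) = \tilde O(n)$ total, and each recursive call is on a graph with $\tilde O(n/t)$ vertices and (after sparsification) $\tilde O(n/t)$ edges so the hypothesis "$G$ has $\tilde O(n)$ edges" is maintained. Unrolling over $O(\log n/\log t)$ levels, each level's total work is $\tilde O(tn)$ (the branching factor $t$ exactly cancels the $1/t$ shrinkage in size), giving $\tilde O(tn)$ overall. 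The main obstacle I anticipate is bookkeeping the logarithmic factors carefully enough to land the final congestion bound at $\log(n)^{\log(n)/\log t}$ rather than something slightly larger — in particular making sure that the Benczúr–Karger flow-cut overhead and the pruning-and-renormalizing overhead at each level only contribute to the *constant in the exponent* and do not, e.g., change the base from $\log n$ to $\log^2 n$; I would handle this by folding everything into the $O(k)$ in the exponent of Theorem~\ref{constructthm} and being explicit that the Lemma's stated $\log(n)^{\log(n)/\log t}$ is shorthand for $\log(n)^{O(\log n/\log t)}$.
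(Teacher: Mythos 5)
Your proposal follows essentially the same inductive argument as the paper: dominance via $T_i + F_i \geq H_i' + F_i \geq H_i + F_i = G_i \geq G$, congestion multiplying across the $O(\log n/\log t)$ recursion levels through Madry's routing of $\sum_i \lambda_i G_i$ in $G$, the pruning-and-renormalizing overhead, and the Leighton--Rao multicommodity flow-cut gap (which is $O(\log n)$, not $O(1)$, as you correctly settle on) for routing $H_i'$ in $H_i$, together with the standard unrolling of $T(n) = \tilde O(tn) + t\cdot T(n/t)$. One minor slip worth noting: each core $H_i$ can have up to $m$ edges (not $O(n/t)$), so the per-level sparsification cost is $\tilde O(tm) = \tilde O(tn)$ rather than $\tilde O(n)$, though this does not change the final $\tilde O(tn)$ bound.
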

\begin{proof} By induction on $n$.  For $n=1$ the claim is vacuous, so suppose $n = t^{k+1}$.  Since $G$ has $O(n \log n)$ edges, the distribution output by theorem \ref{madrythm} will have $O(t \log^2 n)$ entries.  We have $H_i' \geq H_i$ and $H_i + F_i \geq G$.  Furthermore, the inductive hypothesis implies that every tree $T_i$ in $\SampleTree(L_i)$ dominates $H_i'$.  Then,
\[ T_i + F_i \geq H'_i + F_i \geq H_i + F_i  = G_i \geq G \]

Sparsifying the distribution from $O(t\log^2 n)$ to $t$ scales $\lambda_i$ by at most $O(\log^2 n)$, so that $\sum_{i=1}^t\lambda_i G_i$ is routable in $G$ with congestion at most $\log^2 n$ larger than the original distribution.  Since $H_i' \leq 2H_i$, by the multicommodity max-flow/min-cut theorem\cite{LeightonRao} $H_i'$ is routable in $H_i$ with congestion $O(\log n)$.  By the inductive hypothesis, $\E[\SampleTree(L_i)]$ is routable in $H_i'$ with congestion $\log^{O(k)}(n)$.  It follows then that $\E[\SampleTree(L)]$ is routable in $G$ with congestion at most $\log^{O(k+1)}(n)$.

Finally, $\mathtt{ComputeTrees}$ requires $\tilde{O}(tn)$ time to compute the distribution, another $\tilde{O}(tn)$ time to sparsify the cores, and then makes $t$ recursive calls on sparse graphs with $n/t$ vertices.  It follows that the running time of $\mathtt{ComputeTrees}$ is $\tilde{O}(tn)$. 
\end{proof}

\begin{lemma} Let $R$ be the matrix that has a row for each forest edge in our data structure, and $(Rb)_e$ is the congestion on that edge when routing $b$.  If $\E[\SampleTree(L)]$ is routable in $G$ with congestion $\alpha$, then $R$ is a $\alpha$-congestion-approximator for $G$.  Further, $R$ has $\tilde{O}(tn)$ rows.
\end{lemma}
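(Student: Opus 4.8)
The plan is to prove the two defining inequalities $\|Rb\|_\infty \le \opt(b)$ and $\opt(b) \le \alpha\|Rb\|_\infty$ separately, with the graph $\E[\SampleTree(L)]$ serving as the bridge between them, and then to bound the number of rows by a level-by-level count over the recursion. Throughout I would lean on the previous lemma, which already gives that every tree in the sample space of $\SampleTree(L)$ dominates $G$ on all cuts and that $\E[\SampleTree(L)]$ routes in $G$ with congestion $\alpha$.

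First I would set up the combinatorial correspondence. Every tree $T$ in the sample space of $\SampleTree(L)$ is the union of the forests picked along a single root-to-leaf branch of the recursion, so its support is a subset of the forest edges of the data structure, i.e.\ of the rows of $R$. Each forest edge $e$ occurs in exactly one forest $F_i$ at one recursion node, and removing $e$ from \emph{any} sample tree $T$ that contains it induces the same cut $(S_e, V\setminus S_e)$: $S_e$ is the set of vertices separated from the core vertices of that $j$-tree by deleting $e$ (together with the forest subtrees dangling off them at deeper levels). Since $e$ is a bridge of $T$, the congestion it carries when routing $b$ through $T$ is $|b_{S_e}|/\hat c_e$, where $\hat c_e$ is the capacity of $e$ in the data structure; and this equals $|(Rb)_e|$, while $\hat c_e = c^T_{S_e}$.

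For the lower bound, domination gives $\hat c_e = c^T_{S_e} \ge c^G_{S_e}$. Any flow with $Bf=b$ must push $|b_{S_e}|$ net across $(S_e, V\setminus S_e)$, so by averaging over the at most $c^G_{S_e}$ total capacity of that cut in $G$ we get $\opt(b)\ge |b_{S_e}|/c^G_{S_e}\ge |b_{S_e}|/\hat c_e = |(Rb)_e|$; maximizing over $e$ gives $\|Rb\|_\infty\le\opt(b)$. For the upper bound, I would view $\bar G := \E[\SampleTree(L)]$ as a capacitated graph whose edges are exactly the forest edges, with capacity $q_e\hat c_e$ on $e$, where $q_e:=\Pr[T\ni e]$ is the product of the $\lambda$'s along $e$'s branch. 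Averaging the unique tree routings, $\bar f := \sum_T p_T f^T$, is a flow in $\bar G$ with divergence $b$ carrying $\pm q_e b_{S_e}$ on $e$, hence congestion $|b_{S_e}|/\hat c_e = |(Rb)_e|$ there, so $\|C_{\bar G}^{-1}\bar f\|_\infty = \|Rb\|_\infty$. Composing this with the routing of $\bar G$ into $G$ of congestion $\alpha$ (replace the flow on each edge of $\bar G$ by the corresponding scaled routing of that edge's capacity in $G$) yields a flow routing $b$ in $G$ with congestion $\le\alpha\|Rb\|_\infty$, so $\opt(b)\le\alpha\|Rb\|_\infty$. The row count is then immediate: the recursion has depth $\le\log_t n\le\log n$, and at depth $\ell$ there are $\le t^\ell$ nodes each holding $t$ forests of $\le n/t^\ell$ edges, i.e.\ $\le tn$ forest edges per level, for $\tilde O(tn)$ rows in all.

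The hard part will be the bookkeeping in the second paragraph: making rigorous that $S_e$ is well-defined independently of which sample tree $T\ni e$ one picks, and that the data-structure routing of $b$ coincides edge-by-edge with the tree routings — this is most delicate for forest edges deep in the recursion, where $S_e$ is a core-vertex set together with its hanging forest subtrees. I would handle this by the same induction on $n$ used in the previous lemma, carrying the statement ``$\supp(\SampleTree(L))$ is exactly the forest-edge set, each $e$ occurs with probability $q_e$, and deleting $e$ always cuts off $S_e$''; once that is in hand both inequalities are a few lines.
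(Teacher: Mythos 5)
Your proposal is correct and follows essentially the same route as the paper's proof: establish $\|Rb\|_\infty \le \opt(b)$ from tree-edges dominating the corresponding cuts in $G$, establish $\opt(b)\le\alpha\|Rb\|_\infty$ by averaging the unique tree routings to get a flow in $\E[\SampleTree(L)]$ with congestion $\|Rb\|_\infty$ and then composing with the $\alpha$-congestion routing of that graph into $G$, and count rows by a geometric sum over the recursion. The only stylistic difference is that the paper states the row count as the recurrence $E(n)\le nt+tE(n/t)$ rather than your level-by-level tally, and the well-definedness of the cut $S_e$ independent of the completion of $T$ above the core (which you rightly flag as the delicate point) is actually established in the paper's proof of the \texttt{ComputeR} correctness lemma rather than re-derived here, so you can cite it rather than re-prove it by induction.
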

\begin{proof} Since the capacity of each tree-edge dominates the capacity of the corresponding cut in $G$, $\opt(b) \geq \|Rb\|_\infty$.  On the other hand, $b$ can be routed in every tree with congestion $\|Rb\|_\infty$.  By routing a $\Pr[T]$ fraction of the flow through tree $T$, we route $b$ in $\E[\SampleTree(L)]$ with congestion $\|Rb\|_\infty$.  But then $b$ can be routed in $G$ while congesting by at most an $\alpha$ factor larger.

The total number of edges in $R$ satisfies the recurrence $E(n) \leq nt + tE(n/t)$ as each edge is either in one of the $t$ toplevel forests, or in one of the $t$ subgraphs.
\end{proof}

Having constructed our representation of $R$, it remains only to show how to multiply by $R$ and $R^\top$.  We use the following lemmas as subroutines, which are simple applications of leaf-elimination on trees.
\begin{lemma}\label{treeflow} There is an algorithm $\mathtt{TreeFlow}$ that, given a tree $T$ and a demand vector $b$, takes $O(n)$ time and outputs for each tree edge, the flow along that edge when routing $b$ in $T$.
\end{lemma}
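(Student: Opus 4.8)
The plan is to exploit the fact that a tree admits a \emph{unique} flow routing any valid demand $b$, so there is nothing to optimize: we need only compute that flow and report it. First I would root $T$ at an arbitrary vertex $r$. For a non-root vertex $u$, let $e_u$ denote the edge joining $u$ to its parent; removing $e_u$ splits $T$ into the subtree $T_u$ hanging below $u$ and the rest. Conservation of flow forces the flow across $e_u$ to equal, in magnitude, the net demand $s(u) = \sum_{v \in T_u} b_v$ in the subtree, with orientation determined by the sign of $s(u)$. Thus computing $\mathtt{TreeFlow}$ reduces entirely to computing, for every vertex $u$, the subtree sum $s(u)$.

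Next I would compute all the $s(u)$ by a single bottom-up pass using the recurrence $s(u) = b_u + \sum_{w \text{ a child of } u} s(w)$. Evaluating this in post-order --- equivalently, by repeatedly peeling off a leaf and accumulating its $s$-value into its parent, exactly the leaf-elimination primitive used elsewhere in the paper --- touches each vertex and each edge $O(1)$ times, for a total of $O(n)$ time. The flow on $e_u$ is then read off directly as $s(u)$, with the edge oriented from $u$ toward its parent (or the reverse) according to the sign of $s(u)$ and whatever fixed orientation $B$ assigns to $e_u$. If $T$ is a forest rather than a single tree, one runs the procedure on each component; validity of $b$ (zero net demand per component) guarantees $s(r) = 0$ at each root, so the output is consistent.

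There is no substantive obstacle here; the statement is essentially a bookkeeping exercise. The only points meriting a little care are (i) implementing the traversal iteratively, so that a long path in $T$ does not produce $\Theta(n)$ recursion depth and the $O(n)$ bound is genuine, and (ii) getting the sign/orientation conventions right so that the reported per-edge flow agrees with the incidence matrix $B$. Both are handled by the same leaf-elimination bookkeeping invoked for multiplying by $R$ and $R^\top$.
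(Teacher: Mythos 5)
Your proof is correct and matches the paper's (unstated) intent: the paper merely remarks that this lemma is a "simple application of leaf-elimination on trees," which is precisely the bottom-up subtree-sum computation you describe.
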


\begin{lemma}\label{treepot} There is an algorithm $\mathtt{TreePotential}$ that, given a tree $T$ annotated with a price $p_e$ for each edge, takes $O(n)$ time and outputs a vector of vertex potentials $v$ such that, for any $i,j$, the sum of the prices on the path from $j$ to $i$ in $T$ is $v_i - v_j$.
\end{lemma}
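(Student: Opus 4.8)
The plan is to produce the potentials by a single rooted traversal of $T$. First I would root $T$ at an arbitrary vertex $r$ and fix, once and for all, a reference orientation for each tree edge, with the convention that the ``price along an edge'' means $+p_e$ when the edge is crossed in its reference direction and $-p_e$ otherwise; this is forced, since otherwise ``the sum of the prices on the path from $j$ to $i$'' would equal the sum on the reverse path and the requirement $v_i-v_j=v_j-v_i$ would collapse all potentials. Set $v_r := 0$, and process the vertices in BFS (or DFS) order from $r$: when a vertex $u$ whose potential is already set is visited, then for each tree-neighbour $w$ of $u$ that is still unassigned, set $v_w := v_u + (\text{price of }\{u,w\}\text{ crossed from }u\text{ to }w)$ and enqueue $w$. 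Each of the $n-1$ tree edges is handled exactly once at $O(1)$ cost, so the procedure runs in $O(n)$ time.

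For correctness I would first show, by induction on the distance from $r$, that $v_i$ equals the signed sum of prices along the unique $r$-to-$i$ path in $T$: the base case is the initialization $v_r=0$, and the inductive step is exactly the update rule, since the $r$-to-$w$ path is the $r$-to-$u$ path followed by the edge $\{u,w\}$ crossed from $u$ to $w$. Granting this, fix arbitrary $i,j$ and let $a$ be their lowest common ancestor. Decomposing the $r$-to-$i$ path as ($r$-to-$a$) followed by ($a$-to-$i$), and likewise for $j$, we get $v_i-v_j=(\text{sum }r\to i)-(\text{sum }r\to j)$, in which the common $r$-to-$a$ segment cancels, leaving $(\text{sum }a\to i)-(\text{sum }a\to j)=(\text{sum }j\to a)+(\text{sum }a\to i)$, which is the signed sum of prices along the path $j\to a\to i$, i.e. along the unique path from $j$ to $i$ in $T$. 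Hence $v_i-v_j$ equals that path sum, as claimed.

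I do not expect any genuine obstacle: the only point requiring care is the orientation bookkeeping that makes path sums well defined and telescoping valid, after which the statement reduces to a one-line induction together with the standard lowest-common-ancestor decomposition of a tree path. As an alternative implementation, one can run the same computation in the leaf-elimination style used for $\mathtt{TreeFlow}$ in Lemma~\ref{treeflow}: peel off a leaf $w$ with parent $u$ and edge $e=\{u,w\}$, recursively assign potentials to the remaining tree, and then set $v_w := v_u\pm p_e$ according to the orientation of $e$; this is just the BFS above run in reverse order and is also $O(n)$.
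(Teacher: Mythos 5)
Your proof is correct and matches the paper's intent; the paper offers no argument beyond remarking that $\mathtt{TreeFlow}$ and $\mathtt{TreePotential}$ are ``simple applications of leaf-elimination on trees,'' and your BFS-from-a-root construction (together with your closing observation that it is the leaf-elimination recursion run in reverse) is exactly that computation. Your point about the sign convention is worth making explicit: the lemma's phrase ``sum of the prices on the path from $j$ to $i$'' must be read as a signed, orientation-dependent sum, since an unsigned reading would force $v_i - v_j = v_j - v_i$ and collapse all potentials to a constant; in the paper this orientation is inherited from the choice of which side of each single-edge tree cut defines the row of $R$, so that $R^\top$ naturally produces the telescoping signs, and your explicit bookkeeping makes the same thing precise.
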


We begin with computing $R$.  We take as input the demand vector $b$, and then annotate each forest edge $e$ with the congestion $r_e$ induced by routing $b$ through a tree containing $e$.

\mybox{
$\mathtt{Compute}R(b, L = (\lambda_i, F_i, T_i)_i^t)$:
\begin{itemize}
\item For $i = 1,\ldots,t$:
\begin{itemize}
\item Let $T$ be the tree formed by taking $F_i$, adding a new vertex $s$, and an edge from $s$ to each core-vertex of $F_i$.  Augment $b$ with demand zero to the new vertex.
\item $f \gets \mathtt{TreeFlow}(b, T)$.
\item Set $r_e \gets f_e/c_e$ for each forest edge in $F_i$.
\item Set $b'$ to a vector indexed by core-vertices, with $b'_j$ equal to the flow on the edge from $s$ to core-vertex $j$.
\item $\mathtt{ComputeR}(b', T_i)$.
\end{itemize}
\end{itemize}
}

\begin{lemma} The procedure $\mathtt{Compute}R(b, L)$ correctly annotates each edge $e$ with $r_e = (Rb)_e$, and takes $\tilde{O}(tn)$ time.
\end{lemma}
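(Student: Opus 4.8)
The plan is to prove correctness by induction on the recursion depth of the data structure, after isolating the one structural fact about how a tree routing decomposes, and then to bound the running time by the same level-by-level accounting used for $\mathtt{ComputeTrees}$.

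\emph{The structural fact.} Routing a demand vector $b$ in a tree $T$ is unique, and the flow it places on an edge $e$ is $b_{S_e}$, where $S_e$ is either side of the cut obtained by deleting $e$ from $T$. Consider a tree $T = F_i + T'$ in the sample space of $\SampleTree(L)$, where $T'$ is drawn from $\SampleTree(T_i)$ and lives on the core vertices. Each component $K$ of the forest $F_i$ contains exactly one core vertex $c_K$, and $K$ is attached to the rest of $T$ only through $c_K$; hence deleting a forest edge $e$ interior to $K$ separates off a subtree $S_e \subseteq K$ that does not meet $T'$ at all. Therefore (i) the flow on $e \in F_i$ depends only on $b$ and $F_i$ — in particular $(Rb)_e$ is the same for every tree of the sample space containing $e$, so the annotation is well defined and agrees with the matrix $R$ of the preceding lemma — and (ii) deleting instead the edge $(s,c_K)$ from the auxiliary tree $T_s := F_i + \{s\} + \{(s,c_K)\}_K$ separates off all of $K$, so routing the augmented demand $(b,0)$ in $T_s$ puts flow $b_K$ on $(s,c_K)$, which is exactly the net demand a routing of $b$ in $T$ must push through $c_K$ into the core, i.e. the demand vector $b'$ that $T'$ routes. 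This is precisely what $\mathtt{ComputeR}$ computes: it builds $T_s$, calls $\mathtt{TreeFlow}$ (valid since $T_s$ is a tree on $\le n+1$ vertices and $(b,0)$ is a valid demand), records $r_e = f_e/c_e$ on forest edges, and forwards $b'$ to the sub-structure.

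\emph{The induction.} For $n=1$ there are no forest edges. For the inductive step, part (i) shows the top-level forest edges are labeled correctly, and part (ii) shows the vector $b'$ handed to $\mathtt{ComputeR}(b',T_i)$ is exactly what a sampled core tree $T'$ routes. Since $T_i$ is a data structure on $\tilde{O}(n/t)$ vertices, the inductive hypothesis gives that every forest edge inside $T_i$ is labeled with its congestion when routing $b'$ in a tree of $\SampleTree(T_i)$; gluing such a $T'$ onto $F_i$ reproduces the routing of $b$ in $T = F_i + T'$, so these are the correct $(Rb)_e$ values. That accounts for every forest edge of $L$.

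\emph{Running time and main obstacle.} A recursion node on $\nu$ vertices builds $t$ auxiliary trees on $O(\nu)$ vertices and spends $O(\nu)$ per tree on $\mathtt{TreeFlow}$ plus bookkeeping, i.e. $O(t\nu)$ work, then recurses on $t$ cores of $\nu/t$ vertices (when $\nu \le t$ the forest is a spanning tree and the recursion halts). Since the node sizes at each of the $O(\log_t n)$ levels sum to $O(n)$, the work per level is $O(tn)$ and the total time is $O(tn\log_t n) = \tilde{O}(tn)$, as claimed. The only point demanding care is the structural fact — in particular checking that the super-source edge $(s,c_K)$ carries precisely $b_K$, and that the orientation conventions for $f_e$ match those defining the rows of $R$ consistently across recursion levels; the induction and the running-time accounting are then routine.
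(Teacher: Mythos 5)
Your proof is correct and follows essentially the same route as the paper's: the key observation in both is that the flow on a forest edge of $F_i$ is independent of the tree placed on the core (because deleting a forest edge cuts off a sub-forest entirely within one component), which licenses substituting the auxiliary star $T_s$ and running $\mathtt{TreeFlow}$ there, and that the flow on the star edge $(s,c_K)$ is exactly $b_K$, the demand the core computation must route. Your added remarks on well-definedness of the annotation and on orientation conventions are sensible fleshing-out of points the paper leaves implicit, but the argument and the $\tilde O(tn)$ running-time recursion are the same.
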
 \begin{proof} Let $L=(\lambda_i,F_i,T_i)_i^t$.  We argue by induction on the depth of recursion. Fix a level and index $i$.  Observe that the cut in $G$ induced by cutting a forest edge is the same regardless of what tree $T$ lies on the core: it is the cut that separates the part of $F_i$ not containing the core from the rest of the vertices.  It follows that we may place \emph{any} tree on the core vertices, invoke $\mathtt{TreeFlow}$, and obtain the flow on each forest edge.  Next, for each component $S$ of $F_i$, the total excess $b_S$ must enter $S$ via the core vertex.  It follows that in a flow routing $b$ on $F_i + T'$, for any tree $T'$, the restriction of that flow to $T'$ must have excess $b_S$ on the core vertex of $S$, so it suffices to find a flow in the core with demands $b'_j = b_{S_j}$.  But routing $b$ in $F_i + T$ will place exactly $b_{S_j}$ units of flow on the edge from $s$ to core-vertex $j$.

The running time consists of $t$ invocations of $\mathtt{TreeFlow}$ each taking $O(n)$ time, plus $t$ recursive calls on graphs of size $n/t$, for a total running time of $\tilde{O}(tn)$.
\end{proof}

To compute $R^\top$, we assume each forest edge $e$ has been annotated with a price $p_e$ that must be paid by any flow per unit of congestion on that edge, and output potentials $v$ such that $v_i - v_j$ is the total price to be paid for routing a unit of flow from $j$ to $i$.

\mybox{
$\mathtt{Compute}R^\top(L = (\lambda_i, F_i, T_i)_i^t)$:
\begin{itemize}
\item $v \gets 0$
\item For $i = 1,\ldots,t$:
\begin{itemize}
\item $v' \gets \mathtt{Compute}R^\top(T_i)$.
\item Let $T$ be the tree formed by taking $F_i$, adding a vertex $s$, and an edge from $s$ to each core-vertex of $F_i$.  Set $q_e = p_e/c_e$ for each forest edge, and $q_e = v'_j$ for edge $e$ from $s$ to core-vertex $j$.
\item $v'' \gets \mathtt{TreePotential}(T, q)$
\item Add $v''$ to $v$ after removing the entry for $s$.
\end{itemize}
\item Return $v$
\end{itemize}
}

\begin{lemma} Given edges annotated with per-congestion prices, the procedure $\mathtt{Compute}R^\top(L)$ correctly returns potentials $v$ such that $v_k - v_j$ is the cost per unit of flow from vertex $j$ to $k$.
\end{lemma}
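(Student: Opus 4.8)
The plan is to prove this by induction on the depth of recursion, mirroring the correctness argument for $\mathtt{Compute}R$, and to work throughout with the equivalent reformulation that the returned vector $v$ equals $R^\top p$ up to an additive constant. This is legitimate since the lemma asserts only something about the differences $v_k - v_j$, and for the demand $b = e_k - e_j$ the cost of routing one unit of flow from $j$ to $k$ is by definition $p^\top R b = (R^\top p)_k - (R^\top p)_j$. Working with differences has the added benefit that the additive ambiguity in each invocation of $\mathtt{TreePotential}$ never needs to be tracked.

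First I would make explicit the block structure of $R$ already implicit in the $\mathtt{Compute}R$ analysis. For a data structure $L = (\lambda_i, F_i, T_i)_{i=1}^t$ and a demand $b$, the congestions $Rb$ split as follows: on the forest edges of $F_i$ they equal a linear function $A_i b$ of $b$, namely the congestion of the flow routing $b$ in $F_i$ with each component's net excess pushed out through its core vertex; on the forest edges living inside the sub-structure $T_i$ they equal $R_{T_i}(\Pi_i b)$, where $R_{T_i}$ is the congestion-approximator represented by $T_i$ and $\Pi_i$ is the linear map $b \mapsto (b_{S_j})_j$ collecting component-excesses (this is exactly the vector $b'$ passed in the recursion of $\mathtt{Compute}R$). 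Hence $R^\top p = \sum_i A_i^\top p^{(i)} + \sum_i \Pi_i^\top R_{T_i}^\top \hat p^{(i)}$, where $p^{(i)}$ and $\hat p^{(i)}$ are the restrictions of $p$ to the forest edges of $F_i$ and to the forest edges inside $T_i$ respectively.

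The core of the argument is to show that the $i$-th iteration of $\mathtt{Compute}R^\top$ adds exactly $A_i^\top p^{(i)} + \Pi_i^\top\big(R_{T_i}^\top \hat p^{(i)}\big)$ to $v$, up to a constant. By the inductive hypothesis, $v' = \mathtt{Compute}R^\top(T_i)$ equals $R_{T_i}^\top \hat p^{(i)}$ up to a constant; in particular $v'_{c(a)} - v'_{c(b)}$ is the core-routing cost per unit of flow from the core vertex $c(b)$ of $b$'s component to the core vertex $c(a)$ of $a$'s. Now form the tree $T = F_i$ plus a new vertex $s$ joined to every core vertex, annotated with $q_e = p_e/c_e$ on forest edges (the price per unit of \emph{flow}, since congestion is flow over capacity and $p_e$ is the price per unit of congestion) and $q_{(s,j)} = v'_j$ on the new edges. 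By Lemma \ref{treepot}, $\mathtt{TreePotential}(T,q)$ returns $v''$ with $v''_a - v''_b$ equal to the signed sum of $q$-prices along the $b\to a$ path in $T$, i.e.\ to $\sum_e (\text{flow on }e\text{ when routing }e_a - e_b\text{ in }T)\, q_e$. The geometric fact already used for $\mathtt{Compute}R$ — that routing $e_a - e_b$ in $F_i$ with excesses collected at core vertices is exactly what routing $e_a - e_b$ in $T$ does, with the flow on edge $(s,j)$ equal to $(\Pi_i(e_a-e_b))_j$ — then yields, for original vertices $a,b$: the forest part of the sum equals $\sum_e (A_i(e_a-e_b))_e\, p_e = (A_i^\top p^{(i)})_a - (A_i^\top p^{(i)})_b$, and the $s$-incident part equals $q_{(s,c(a))}-q_{(s,c(b))} = v'_{c(a)} - v'_{c(b)} = (\Pi_i^\top v')_a - (\Pi_i^\top v')_b$ (both this and the $\Pi_i$ term are zero when $a,b$ share a forest component, since then the path avoids $s$). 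Dropping the $s$-entry and summing over $i$ gives $v_k - v_j = (R^\top p)_k - (R^\top p)_j$ for every pair $j,k$, which is the claim.

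I expect the main obstacle to be purely bookkeeping: getting the orientation/sign conventions straight so that ``flow on edge $e$'', ``signed sum of prices along a path'', and the signed congestion entry $(A_i b)_e$ all line up consistently, and confirming that $F_i$ spans all the original vertices (with one core vertex per component) so that every pair $(j,k)$ is actually covered. The base case $n = 1$ is vacuous, and the $\tilde O(tn)$ running time, if desired, follows as for $\mathtt{Compute}R$: $t$ calls to $\mathtt{TreePotential}$ at $O(n)$ time each, plus $t$ recursive calls on instances of size $n/t$.
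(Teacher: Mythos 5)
Your proposal is correct and follows essentially the same approach as the paper: induction on recursion depth, additivity over the index $i$, the star construction placing $v'$ as per-unit prices on the $s$-incident edges and $q_e = p_e/c_e$ on forest edges, and then invoking \texttt{TreePotential} and tracing the $j \to k$ path. The paper's own proof is more informal (speaking of ``tolls'' and the per-unit cost of routing from $j$ to $k$ through core vertices) whereas you make the underlying linear algebra explicit via the block decomposition $R^\top p = \sum_i A_i^\top p^{(i)} + \sum_i \Pi_i^\top R_{T_i}^\top \hat p^{(i)}$; that added precision is a genuine improvement in rigor but not a different argument.
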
 
\begin{proof} Let $L=(\lambda_i,F_i,T_i)_i^t$.  We argue by induction on the depth of recursion. Fix a level; a flow must pay its toll to each $G_i$, so the resulting potential equals the sum of the potentials for each $i$.  Fix an index $i$.  A unit of flow from $j$ to $k$ is first routed from $j$ to the core-vertex of the component of $F_i$ containing $j$, then to the core-vertex of the component containing $k$, and then finally to $k$.  By induction, we assume that $v'$ yields potentials that give the per-unit costs of routing between core-vertices.  Placing a star on the core with the edge from $s$ to core-vertex $j$ having per-unit cost $v'_j$ preserves those costs.   If $p_e$ is the price of an edge per unit of congestion, then $q_e = p_e/c_e$ is the price of an edge per unit of flow.  It follows that the total toll paid is the same as the toll paid in $T$; thus, the potentials output by $\mathtt{TreePotential}(T, q)$ are correct.

The running time consists of $t$ recursive calls to $\mathtt{Compute}R^\top$ on graphs of size $n/t$, plus $t$ invocations of $\mathtt{TreePotential}$ each taking $O(n)$ time, for a total running time of $\tilde{O}(tn)$.
\end{proof}

\section{Final Remarks}
We remark that there are many other ways to obtain good congestion approximators.  The oblivious routing schemes of \cite{SatishObliv, RackeObliv} require polynomial time to compute, but, once computed, give us a single tree whose single-edge cuts yield a $\log(n)^{O(1)}$-congestion approximator.  Furthermore, we only need the actual tree, and not the routings of the tree back in the original graph.  If such a single tree could be computed in nearly-linear time, it would make an ideal candidate for use in our algorithm.

There have been substantial simplifications to Spielman and Teng's original algorithm (see \cite{Peng, KelnerSDD}).  It may be possible to use some of those techniques to further simplify our algorithm.

\ifFOCS
\enlargethispage{-0.05in}
\bibliographystyle{IEEEtranS}
\else
\bibliographystyle{plain}
\fi
\bibliography{maxflow}

\appendix
\section{\label{madryfix}Fixing Theorem \ref{madrythm}}
The proof of theorem \ref{madrythm} maintains a length function $l(e)$ for each edge, and repeatedly invokes an algorithm $\mathtt{SmallStretchTree}$ that returns a spanning tree $T$ on $G$ with,
\begin{equation}
 \sum_e l_T(e)c(e) \leq \tilde{O}(\log n) \sum_e l(e)c(e)\label{ssst}
 \end{equation}
where $l_T(e)$ is the length of the path between $e$'s endpoints in the tree.   Without loss of generality, by scaling, we assume $\sum_e l(e)c(e) = m$.  Let $\chi(e,e') = 1$ if $e'$ is a tree edge that lies on the path in $T$ containing $e$.  Then,
\[ \sum_e l_T(e)c(e) = \sum_{e} c(e)\sum_{e'} chi(e,e') l(e') = \sum_{e'}l(e')\sum_{e}\chi(e,e')c(e) = \sum_{e'} l(e')c_T(e') \]
where $c_T(e')$ is the total capacity of edges routed through $e'$ in $T$.

The dependence on the capacity ratio arises from the fact that there may be many different scales of congestion on the edges of $T$.  The solution is simply to replace $l$ with $l'(e) = \frac{l(e) + c(e)^{-1}}{2}$, a mixture of the original lengths with the inverse capacities.  Constructing a small-stretch tree with respect to $l'$ still satisfies equation \eqref{ssst} with an extra factor of two, but also implies no tree edge is congested by more than $\tilde{O}(m)$.
\end{document}